\renewcommand{\Indentp}[1]{%
  \advance\leftskip by #1
  \advance\skiptext by -#1
  \advance\skiprule by #1}%
\renewcommand{\Indp}{\algocf@adjustskipindent\Indentp{\algoskipindent}}
\renewcommand{\Indm}{\algocf@adjustskipindent\Indentp{-\algoskipindent}}
\DeclareMathOperator*{\argmax}{\arg\,\max}
\newcommand{\floor}[1]{\lfloor #1 \rfloor}
\begin{document}
\newcommand{\abs}[1]{\left| #1 \right|}
\newcommand{\vabs}[1]{\left\| #1 \right\|}

\title{Optimization of CNOT circuits on 
{limited connectivity architecture}
}

\author{Bujiao Wu$^{1,2,3}$}
\author{Xiaoyu He$^{1,2}$}
\author{Shuai Yang$^{1,2}$}
\author{Lifu Shou$^{1,2}$}
\author{Guojing Tian$^{1,2}$}
\author{Jialin Zhang$^{1,2}$}
\author{Xiaoming Sun$^{1,2}$}

\email{sunxiaoming@ict.ac.cn}

\affiliation{
 $^{1}$Institute of Computing Technology, CAS, Beijing, 100190, China\\
$^{2}$University of Chinese Academy of Sciences, Beijing, 100049, China\\
$^{3}$Center on Frontiers of Computing Studies, School of Computer Science, Peking University, Beijing, 100871, China
}

\begin{abstract}
{A CNOT circuit is the key gadget for entangling qubits in quantum computing systems. However, the qubit connectivity of noisy intermediate-scale quantum (NISQ) devices is constrained by their {limited connectivity architecture}. To improve the performance of CNOT circuits on NISQ  devices, we investigate the optimization of the size/depth of CNOT circuits under the {limited connectivity architecture}. We present a method that can optimize the size of any $n$-qubit CNOT circuit $O\left(\frac{n^2}{\log \delta}\right)$ on any connected graph with minimum degree $\delta$, and prove this bound is optimal for the regular graph. For the near-term sparsely connected structure, we additionally present a method that can optimize the size of any $n$-qubit CNOT circuit to below $2n^2$. The numerical experiment shows that our method performs better than state-of-the-art results. Specifically, we present an example to illustrate the applicability of our algorithm. For the grid structure, which is commonly used in current quantum devices, we demonstrate that the depth of any $n$-qubit CNOT circuit can be optimized to be linear in $n$ with certain ancillary qubits (ancillas). Experimental results indicate that this method has significant improvements compared with all of the existing methods. We additionally test our algorithms on the five-qubit IBMQ devices, and the experiments show that the measurement results of the optimized circuit with our algorithm are more robust to noise compared with the IBM mapping method.}
\end{abstract}

\maketitle

\newtheorem*{definition}{\indent Definition}
\newtheorem*{observation}{\indent Observation}
\newtheorem{lemma}{\indent Lemma}
\newtheorem{theorem}{\indent Theorem}
\newtheorem*{corollary}{\indent Corollary}
\newtheorem*{conjecture}{\indent Conjecture}
\newtheorem*{fact}{\indent Fact}
\newtheorem*{claim}{\indent Claim}

\def\QEDclosed{\mbox{\rule[0pt]{1.3ex}{1.3ex}}}
\def\QED{\QEDclosed}
\def\proof{\indent\textbf {Proof}.}
\def\endproof{\hspace*{\fill}~\QED\par\endtrivlist\unskip}

 \newcommand{\bujiao}[1]{{\color{blue} #1}}

 \newcommand{\pbra}[1]{\left( #1 \right)}
 \newcommand{\cbra}[1]{\left\{ #1 \right\}}
\newcommand{\sbra}[1]{\left[ #1 \right]}
\newcommand{\abra}[1]{\left\langle #1 \right\rangle}
\newcommand{\bmat}[1]{\begin{bmatrix} #1 \end{bmatrix}}
\newcommand{\bin}{\{0,1\}}

 \newcommand{\GL}{\mathrm{GL}}
\newcommand{\CNOT}{\mathtt{CNOT}}
 \newcommand{\Abb}{\mathbb{A}}
  \newcommand{\Cbb}{\mathbb{C}}
\newcommand{\Fbb}{\mathbb{F}}
\newcommand{\Mbb}{\mathbb{M}}
\newcommand{\Nbb}{\mathbb{N}}
\newcommand{\Rbb}{\mathbb{R}}
\newcommand{\Sbb}{\mathbb{S}}
\newcommand{\Tbb}{\mathbb{T}}
\newcommand{\Ubb}{\mathbb{U}}
\newcommand{\Vbb}{\mathbb{V}}
\newcommand{\Zbb}{\mathbb{Z}}
\newcommand{\Acal}{\mathcal{A}}
\newcommand{\Bcal}{\mathcal{B}}
\newcommand{\Ccal}{\mathcal{C}}
\newcommand{\Ecal}{\mathcal{E}}
\newcommand{\Fcal}{\mathcal{F}}
\newcommand{\Ical}{\mathcal{I}}
\newcommand{\Tcal}{\mathcal{T}}
\newcommand{\Gscr}{\mathscr{G}}
\newcommand{\Lscr}{\mathscr{L}}
\newcommand{\Amat}{\bm{\mathrm A}}
\newcommand{\Bmat}{\bm{\mathrm B}}
\newcommand{\Cmat}{\bm{\mathrm C}}
\newcommand{\Dmat}{\bm{\mathrm D}}
\newcommand{\Fmat}{\bm{\mathrm F}}
\newcommand{\Lmat}{\bm{\mathrm L}}
\newcommand{\Pmat}{\bm{\mathrm P}}
\newcommand{\Rmat}{\bm{\mathrm R}}
\newcommand{\Tmat}{\bm{\mathrm T}}
 \newcommand{\Mmat}{\bm{\mathrm M}}
\newcommand{\Imat}{\bm{\mathrm I}}
\newcommand{\Umat}{\bm{\mathrm U}}
\newcommand{\Vmat}{\bm{\mathrm V}}
\newcommand{\Wmat}{\bm{\mathrm W}}
\newcommand{\Xmat}{\bm{\mathrm X}}
\newcommand{\Ymat}{\bm{\mathrm Y}}
\newcommand{\Zmat}{\bm{\mathrm Z}}
\newcommand{\onemat}{\bm{\mathrm 1}}
\newcommand{\zeromat}{\bm{\mathrm 0}}
\newcommand{\Tsfmat}{\bm{\mathsf T}}
\newcommand{\Rsfmat}{\bm{\mathsf R}}
\newcommand{\xmat}{\bm{\mathrm X}}

\section{Introduction}
Quantum circuit synthesis is a process to construct a quantum circuit that implements a desired unitary operator and optimizes its size/depth in terms of a given gate set, which is an important task in the field of quantum computation and quantum information~\cite{ShendeBM06,vartiainen2004efficient,nielsen2002quantum}.
{There are two key limitations of the current intermediate-scale quantum devices. First, the performance and reliability of quantum devices heavily depend on the length of time that the underlying quantum states can remain coherent. Hence it is natural to design the algorithm with less coherent time and environmental noise~\cite{boixo2018characterizing, Arute2019}, in other words, with smaller size and/or lower depth. 
The second limitation is that the state-of-art quantum devices do not support placing 2-qubit gates (usually the CNOT gates) in arbitrary pairs of qubits. The 2-qubit gates are only allowed to be placed between the adjacent qubits~\cite{ibmqexp2017,ye2019propagation, Arute2019}. 
We denote a qubit as a vertex, and use
an edge between two vertices to represent a 2-qubit gate that can be performed on these two qubits. Then the limitations of the qubit connection can be represented as a \emph{{limited connectivity architecture}}.
The {limited connectivity architecture} for the near-term devices is usually selected as grid-style graphs ~\cite{ibmqexp2017,ye2019propagation,boixo2018characterizing}. 
 Meanwhile, the $d$ dimensional grid is also a good candidate for {limited connectivity architecture} for quantum supremacy by Harrow \emph{et al.}~\cite{harrow2018approximate}.}

CNOT circuits, in which there are only CNOT gates, are indispensable for quantum circuit synthesis to construct general circuits~\cite{aaronson2004improved,patel2008optimal,barends2014superconducting,vartiainen2004efficient}, since people often use CNOT gates with some single-qubit gates to build universal quantum computing~\cite{shi2002both,barenco1995elementary,boykin2000new}. The optimization of the size/depth of CNOT circuits shed some light on the more general problem of arbitrary circuit mapping.
CNOT circuits also dominate stabilizer circuits, which play an important role in quantum error correction~\cite{nielsen2002quantum} and quantum fault-tolerant computations~\cite{bravyi2005universal}. Aaronson \textit{et al.} \cite{aaronson2004improved} proved that any stabilizer circuit has a canonical form, \textit{i.e.}, H-C-P-C-P-C-H-P-C-P-C, where H and P are one layer of Hadamard gates and Phase gates respectively, and C is a block of CNOT gates only. {It follows that the depth of stabilizer circuits equals $5d_{\text{CNOT}} + 5$, where $d_{\text{CNOT}}$ is the depth of CNOT circuits.}
Hence, the optimization of CNOT circuits can be directly applied to the optimization of stabilizer circuits.
 
 Many researchers are aiming at reducing the size/depth of CNOT circuits without {limited connectivity architecture}~\cite{moore2001parallel,jiang2019optimal,patel2008optimal}.
 {For instance, Patel \emph{et al}.~\cite{patel2008optimal} proposed an algorithm to optimize any CNOT circuits to $O(n^2/\log n)$ size on the full connectivity architecture.} Moore and Nilsson \cite{moore2001parallel} proposed an algorithm to parallelize any CNOT circuits to $O(\log n)$ depth with $O(n^2)$ ancillas { on the full connectivity architecture}, in which the depth matches the lower bound $\Omega(\log n)$.

However, {these} work can not be directly applied to near-term quantum devices with {limited connectivity architecture}.

There are several size optimization algorithms for CNOT circuits under the {limited connectivity architecture}.
Kissinger \emph{et al.}~\cite{kissinger2019cnot} proposed an algorithm that gives a $2n^2$-size equivalent circuit for any CNOT circuits if the { architecture} contains a Hamiltonian path. Unfortunately, their optimized size is $O(n^{3})$ for the { architecture} without a Hamiltonian path. Nash \emph{et al.}~\cite{nash2019quantum} proposed a similar algorithm simultaneously which gives a $4n^2$-size equivalent CNOT circuit for any CNOT circuits under any connected graph.
{There arises the following question:}

{\emph{Given any CNOT circuit, how can we implement it on the state-of-art quantum devices with the smallest size of quantum gates and/or lowest circuit depth?}}

In this paper, we first consider how to optimize the size of the CNOT circuit without ancillae. Our algorithm achieves a worst-case bound $O\left(\frac{n^2}{\log \delta}\right)$-size on any connected graph with minimum degree $\delta$.
{Our algorithm is the generalization of Patel \emph{et al.}~\cite{patel2008optimal}.}
Furthermore, we prove this bound is tight for the regular graph. For the sparse graph with maximum degree $O(1)$, we slightly improve the results of Kissinger \emph{et al.}~\cite{kissinger2019cnot} and Nash \emph{et al.}~\cite{nash2019quantum}. Specifically, we propose an algorithm that can optimize the size of any given CNOT circuit to $2n^2$ on any connected graph. We simulate this size optimization algorithm on some particular graphs in near-term quantum devices, and the simulation experimental results show our optimized size is smaller than the existing results~\cite{kissinger2019cnot,nash2019quantum}. 

Secondly, based on the rapid decoherence of the quantum system and the grid constriction of the recent quantum devices~\cite{Arute2019,boixo2018characterizing}, we propose an algorithm that can optimize the depth of any given CNOT circuit to $O\pbra{\frac{n^2}{\min\cbra{m_1, m_2}}}$ with $3n \leq m_1m_2\leq n^2$ qubits on $m_1 \times m_2$ grid. The optimized depth is asymptotically tight when $m_1m_2=n^2$.
We generalize the result to any constant $d$ dimensional grid. We also give the experimental result of the depth optimization algorithm on an $n\times n$ grid. As the number of qubits grows, the optimized depth has significant advantages over the existing size optimization algorithms.
{We give the comparison of our algorithms and existing algorithms in Table \ref{tab:comp_algorithms}.
}

\begin{table*}[t]
    \centering
    \caption{Comparison of the size/depth optimization algorithms for CNOT circuit on limited connectivity architectures, where $\delta$ is the minimum degree of the architecture.}
    \begin{tabular}{c|c|c|c|c}
    \hline\hline
    Algorithms  & Architecture & Optimized size & Optimized depth & Ancillas\\
    \hline
     Patel et al. & Full connectivity  & $O(n^2/\log n)$ & $O(n^2/\log n)$ & 0\\
     \hline
     Jiang et al. & Full connectivity & $O(n^2/\log n)$ & $O(\log n +\frac{n^2}{(n+m)\log (n+m)})$
    & $m$ \\
     \hline 
     Kissinger et al. & With a Hamiltonian path & $2n^2$ & $2n^2$ & 0\\
       \hline
     Nash et al. & Any architecture & $4n^2$ & $4n^2$ & 0\\
       \hline
     Alg. 1 (SBE) & Any architecture & $O(n^2/\log \delta)$ & $O(n^2/\log \delta)$ & 0\\
     \hline
     Alg. 2 (ROWCOL) & Any architecture & $2n^2$ & $2n^2$ & 0\\
       \hline
     Alg. 3 (DepAncGrid) & Grid architecture & $O(n^2)$ & $O(n^2/\min(m_1,m_2))$ & $m_1 m_2-n$\\
     \hline\hline
    \end{tabular}
    \label{tab:comp_algorithms}
\end{table*}

{In the rest of the paper, we cover the basic notation of this paper, and the basic preliminaries of the CNOT circuit and its properties in Sec. \ref{sec:pre}. In Sec. \ref{sec:sizeOpt_gen} we introduce a size optimization algorithm, and the lower bounds on the general graph. We also give a size optimization algorithm in Sec. \ref{sec:sizeOpt_near}, together with the numerical comparison of our algorithms and existing algorithms. Additionally, we give an example to show the application of 
the algorithm.
Next, we introduce our depth optimization and the experimental results in Sec. \ref{sec:DepthOpt}. In Sec. \ref{sec:experimentIBMQ}, we implement the optimized CNOT circuit on the IBMQ device. The experimental results show fewer outcome errors compared to the original circuit on the IBMQ noisy device.
Finally, we give a discussion in Sec. \ref{sec:discuss}.
}

{\section{Preliminary}}
\label{sec:pre}

\textbf{Notations.}
We use $[n]$ to denote $\{1,2,\dots,n\}$, $\Fbb_q$ to denote field with $q$ elements, $\oplus$ to denote addition under $\Fbb_2$, $\GL(n,2)$ to denote set of all $n\times n$ invertible matrix over $\Fbb_2$,
{$\Imat$} to denote the identity matrix. Let $e_j$ be the $j$-th coordinate basis vector with all zeros but a $1$ in the $j$-th position. In the later sections, with a little abuse of symbols, we use a decimal number to represent the ceiling of this number when it needs to be an integer.

\begin{figure}
  \centering
   \includegraphics[width =0.5\textwidth]{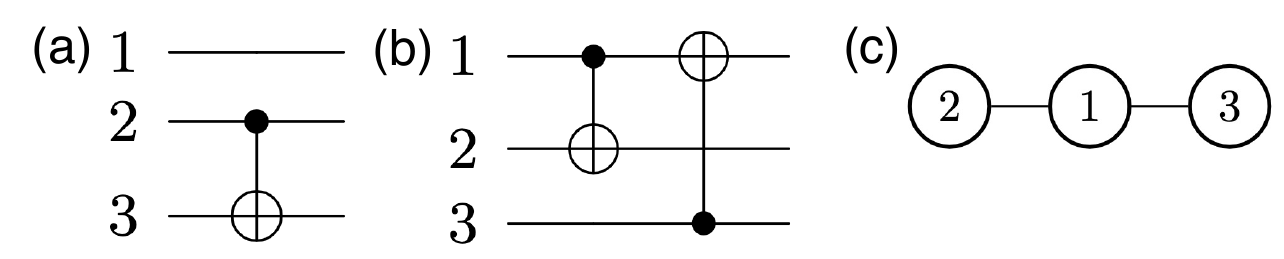}
  \caption{A 3-qubit CNOT circuit which (a) cannot be directly performed; (b) can be directly performed; {on the  limited connectivity architecture} (c).}
    \label{fig:CNOTMRep}
\end{figure}

\textbf{CNOT/SWAP Gate and Circuit.}
A CNOT gate, with control qubit $q_{i}$ and target qubit $q_{j}$, maps $\pbra{q_{i},q_{j}}$ to $\pbra{q_{i},q_{j}\oplus q_{i}}$. A SWAP gate, operating on two qubits $q_i$ and $q_j$, swaps $\pbra{q_i, q_j}$ to $\pbra{q_j, q_i}$. A CNOT circuit is a circuit that only contains CNOT gates. {We refer to a CNOT circuit with size $s$, which means the number of CNOT gates equals $s$ in this CNOT circuit.}

\textbf{Circuit with {limited connectivity architecture}.}
We use graph $G(V,E)$ to represent the {limited connectivity architecture} of two-qubit gates (CNOT gate) in the circuit. A vertex in $G$ represents a qubit, and the two-qubit gate (CNOT gate) can only operate on two qubits that are connected in $G$. 
We say a circuit $\Ccal$ is under {limited connectivity architecture} $G$ if all two-qubit gates in $\Ccal$ satisfy the {limited connectivity architecture}. Fig. \ref{fig:CNOTMRep} gives an example of the circuit {on the limited connectivity architecture}. CNOT circuit in Fig. \ref{fig:CNOTMRep} (a) cannot be performed directly on a $3$-qubit quantum device with the {limited connectivity architecture} in \ref{fig:CNOTMRep} (c). In contrast, the CNOT circuit in Fig. \ref{fig:CNOTMRep} (b) can be performed directly on it.

\textbf{Circuit with ancillas.} We say a CNOT circuit $\mathcal{C}_{n,m-n}$ with $n$-qubit inputs and $(m-n)$-qubit ancillas implements an $n$-qubit circuit $\Ccal$, if for any input $\ket{x}$ with ancillas $\ket{0}^{\otimes (m-n)}$,  the results of $\mathcal{C}_{n,m-n}$ is $\Ccal\ket{x}\otimes\ket{0}^{\otimes (m-n)}$. We say two circuits (with or without ancillas) are equivalent if they implement the same circuit $\Ccal$. 

\begin{figure}
    \centering
      \begin{tabular}[b]{c}
          \Qcircuit @C=1em @R=1.4em {
\lstick{1}&\ctrl{1}	&\targ	&\qw\\
\lstick{2}&\targ	&\qw	&\qw\\
\lstick{3}&\qw	&\ctrl{-2}	&\qw
}
\\
(a)
  \end{tabular}\qquad \quad
  \begin{tabular}[b]{c}
     $\begin{bmatrix}
      1 & 0 & 1\\
      1& 1 & 0\\
      0 & 0 & 1
      \end{bmatrix}$
      \\
      (b)
  \end{tabular}
    \caption{(a) A three-qubit CNOT circuit; (b) The matrix representation of the CNOT circuit in (a).}
    \label{fig:CNOTRep}
\end{figure}

\textbf{Matrix representation of CNOT circuit~\cite{moore2001parallel}.}
We use $\CNOT_{i,j}$ to denote CNOT gate with control qubit $q_{i}$ and target qubit $q_{j}$. The CNOT gate is an invertible linear map $\begin{bmatrix}
1&0\\
1&1\\
\end{bmatrix}$ over $\Fbb_2$.
It is easy to check, that a CNOT gate $\CNOT_{i,j}$ is equivalent to the row operation which adds the $i$-th row to $j$-th row over $\Fbb_2$ in the corresponding invertible matrix. By the linearity property of CNOT circuits, we can represent an $n$-qubit CNOT circuit $\Ccal$ as an invertible matrix $\Mmat\in\GL(n,2)$~\cite{moore2001parallel,patel2008optimal}, and the $j$-th column of $\Mmat$ equals $\Ccal e_j$.
We use $R(i,j)$ to denote such row adding operation in the matrix, and call the $\pbra{i,j}$ its index set. We take a 3-qubit CNOT circuit as an example for the matrix representation as shown in Fig. \ref{fig:CNOTRep}.
A sequence of row adding operations that transform $\Mmat$ to $\Imat$ corresponds to a CNOT circuit represented by $\Mmat^{-1}$. The size optimization of CNOT circuit $\Ccal$ is equivalent to optimizing a parameter $t$ such that there exist index pairs $(i_1,j_1),\dots,(i_t,j_t)$ satisfy $\Mmat=\prod_{k=1}^tR(i_k,j_k)$.
{Fig. \ref{fig:cnot_rowAdd} illustrates the equivalence of the CNOT gate operation on a CNOT circuit and row addition on its boolean matrix representation.}
The summation for the inputs is under module 2 in later sections. 

\begin{figure}
    \centering
    \includegraphics[trim = 0mm 0mm 0mm 35mm, clip=true, width = 0.5\textwidth]{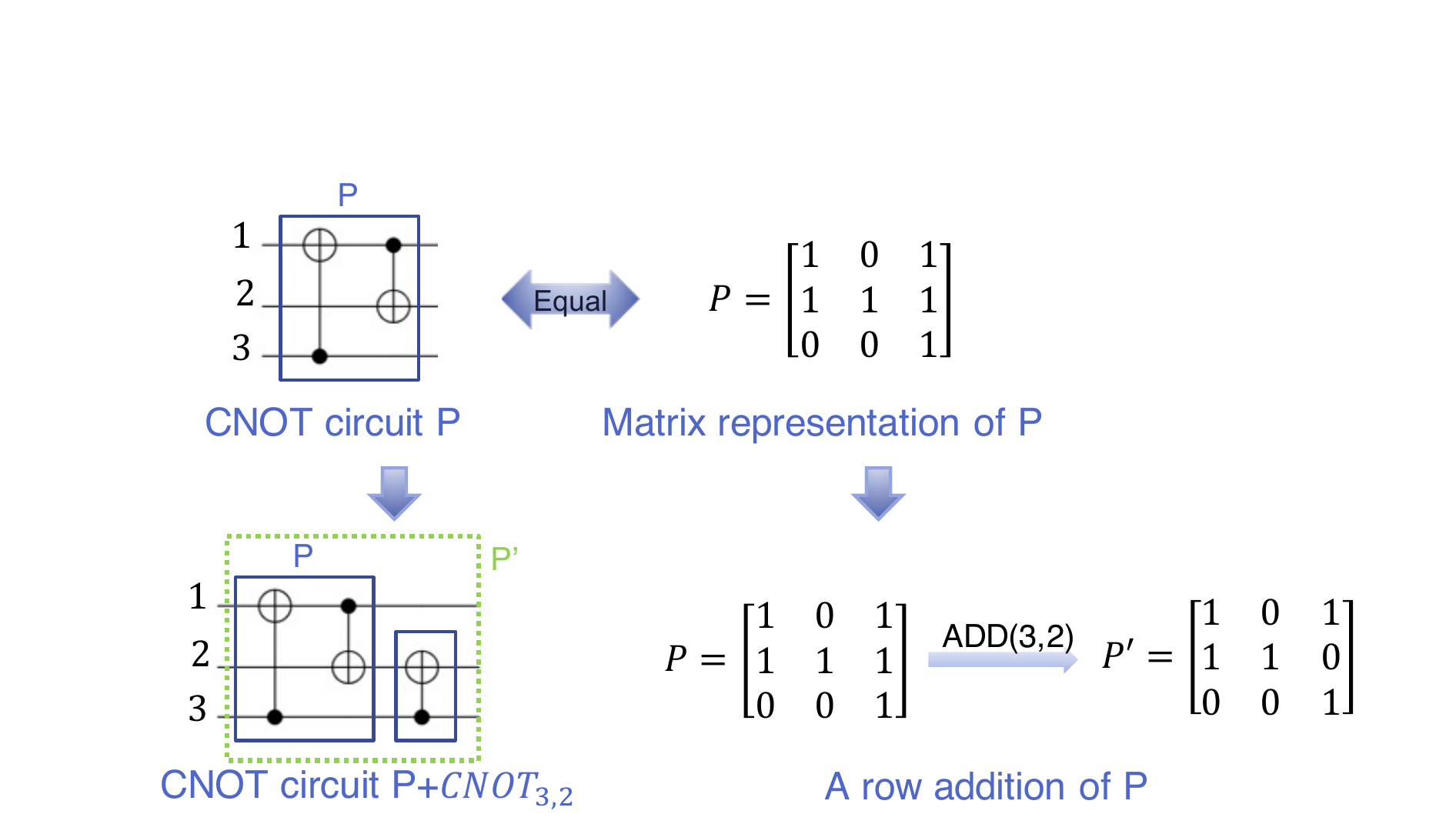}
    \caption{{Illustration of the equivalence of CNOT gate CNOT$_{3,2}$ operating on a CNOT circuit $P$ and row addition ADD$(3,2)$ on its  boolean matrix $P$.}}
    \label{fig:cnot_rowAdd}
\end{figure}

\textbf{Grid Graph.} 
In a $d$ dimensional grid graph $G(V,E)$ with
the size of each dimension be $m_i$.
A vertex in this $m_1\times \cdots \times m_d$ grid can be represented as a $d$ dimensional integer vector $(p_1,p_2,\dots,p_d)$, where $p_i\in[m_i]$. 
An $n$-qubit CNOT circuit is under $m_1\times \cdots \times m_d$ grid, which means the number of ancillas is $m_1\cdots m_d-n$, where $m_1\cdots m_d\geq n$, and the initial input $x\in\cbra{0,1}^n$ is  located on any $n$ vertices of the grid.

\textbf{Parallelized row-addition Matrices~\cite{jiang2019optimal}.}
We say a matrix $\Rmat$ is a parallelized row-addition matrix if there exists $t\in[n], \bm i\in[n]^t, \bm j\in[n]^t$ such that $i_k,j_k$'s are $2t$ different indices and $\Rmat=\prod_{k=1}^tR(i_k,j_k)$. A parallelized row-addition matrix corresponds to a CNOT circuit with depth $1$. Hence optimizing the depth of a CNOT circuit $\Ccal$ is equivalent to optimizing a parameter $t$ such that there exists $t$ parallelized row-addition matrices $\Rmat_1,\dots,\Rmat_t$ and $\Mmat=\prod_{k=1}^t\Rmat_k$.

\textbf{Several concepts in graph theory.} 
The degree of a vertex is the number of edges that are incident to the vertex. In graph $G$, $\Delta$ and $\delta$ denote the maximum and minimum degree of its vertices respectively. A graph is regular if $\Delta=\delta$. The Steiner tree with terminals set $S\subseteq V$, is a connected sub-graph of $G$ with a minimum number of edges that contain all vertices in $S$. A cut vertex is a vertex whose removal will make the connected graph disconnected.

\section{Size optimization on General graph}
\label{sec:sizeOpt_gen}

In this section, we propose an algorithm aiming at optimizing the size of CNOT circuits for quantum systems on the general {limited connectivity architecture}, as shown in Theorem \ref{thm:TopoAveDeg}. We additionally prove that our algorithm is asymptotically tight for the regular graph.

\subsection{Size optimization algorithm}

Theorem \ref{thm:TopoAveDeg} is the generalization of Patel \emph{et al.}~\cite{patel2008optimal}, which optimizes the size of CNOT circuits on the { full connectivity architecture} and gives the optimized size of $O\pbra{\frac{n^2}{\log n}}$. The most significant difference between the techniques of Theorem \ref{thm:TopoAveDeg} and the algorithms in Refs.~\cite{kissinger2019cnot,nash2019quantum} is that here we eliminate several columns simultaneously instead of eliminating a single column in each iteration.

\begin{theorem}\label{thm:TopoAveDeg}{
{Algorithm \ref{alg:nearfutureAlg} can optimize the size of any $n$-qubit CNOT circuits} to $O\pbra{\frac{n^{2}}{\log \delta}}$ under a connected graph with minimum degree $\delta$ as the {limited connectivity architecture}.}
\end{theorem}

            \begin{proof}{
    Let $k = n/\delta$ in Theorem \ref{thm:TopoAveDegNewVersion}, where $\delta$ is the minimum degree of graph $G(V,E)$. It is easy to check the summation of the degree of any $k$ vertices are greater than $n$, and thus we have an $O\pbra{\frac{n^2}{\log \delta}}$-size CNOT circuit.}
    \end{proof}
    
It follows that the optimized size in Theorem \ref{thm:TopoAveDeg} is asymptotically tight for a nearly regular graph in which all vertices have the same order of degree by Theorem \ref{thm:LowerboundSize}.
   
 \begin{widetext}
 \begin{algorithm}[!ht]
\SetKwInOut{Input}{input}\SetKwInOut{Output}{output}
\Input{{Integers $n, k$ such that $\sum_{j=1}^k d_{i_j}\geq n$ for any $k$ vertices $i_1,\ldots, i_k$, $s= \log (n/k)/2$}, matrix $\Mmat \in \Fbb_2^{n \times s}$, graph $G(V,E)$ where $|V|=n$.}
\Output{Row additions to transform $\Mmat$ into $\sbra{e_1,\ldots, e_s}$.}
    \emph{Let $\Mmat^{(1)}$ be the first $s$ rows of $\Mmat$ and $\Mmat^{(2)}$ be the rest rows, and $T$ be the 2-approximate minimum Steiner tree for vertices $\cbra{V_1,... , V_s}$ in $G$\label{step:ini_steiner}}\;
    \For{$j\leftarrow 1$ \KwTo $s$\label{step:for_j_s}}{
    \qquad   \emph{Eliminate the $j$-th column of $\Mmat^{(1)}$ to $e_j$ under graph $T$ with Lemma \ref{lem:SizeSubColSequence}}\;
    \label{step:elimi}
    }
    \emph{$l := \argmax_{s<j\leq n} \deg(V_j)$\label{step:l_argmax}}\;
    \For{$j\leftarrow 1$ \KwTo $s$\label{step:j_for_sec}}{
     \qquad\emph{Eliminate $(l,j)$-th element $\Mmat_{l,j}$ to 0 with the $j $-th row of $\Mmat^{(1)}$ under the minimum path of the vertices $V_{j}$ and $V_{l}$ with Lemma \ref{lem:SizeSubColSequence}
     \label{step:elimit_sec}}\;
    }
    \For{$i \leftarrow 1$ \KwTo $2^{s}$\label{step:for_i_exp}}{
     \qquad    \emph{Let $\text{Gray}(i) := i\oplus \floor{i/2}$ be the Gray code of $i$
     \label{step:for_grapy1}}\;
     \qquad\emph{Let $S$ be the set containing all the rows in $\Mmat^{(2)}$ with Gray code equals $\text{Gray}(i)$}\label{step:for_grapy2}\;
      \qquad   \emph{Transform the binary string of row $l$ to $Gray(i)$ by adding one of the rows in $\Mmat^{(1)}$ to row $l$ under the minimum path between the vertices of these two rows
        with Lemma \ref{lem:SizeSubColSequence}\label{step:for_grapy3}}\;
        \qquad  \While{$|S|\ne \emptyset$\label{step:while_s}}{
     \qquad \qquad   \emph{Let set $S'$ be random $k$ rows of $S$, and let set $R$ be the vertices of these rows in $S'$\label{step:while_inner1}}\;
     \qquad \qquad   \emph{Eliminate rows in $S'$ to 0 by row $l$ under the 2-approximate minimum Steiner tree of set $R$ in $G$ (Lemma \ref{lem:SizeSubColSequence})\label{step:while_inner2}}\;
    }
}
\caption{\textbf{(SBE)} {Eliminate the first $s$ columns of the given invertible matrix.}}
\label{alg:nearfutureAlg}
\end{algorithm}
\footnotetext[1]{The Gray code for row $i$ equals Gray$(i)$.}
 \end{widetext}

We give an explicit algorithm to show the upper bound in Theorem \ref{thm:TopoAveDeg}. Recall that the construction of the CNOT circuit is equivalent to constructing an invertible matrix with row addition operations by Sec. \ref{sec:pre}.
Algorithm \textbf{SBE} (size block elimination) gives the row additions for the first $s$ columns of the invertible matrix under graph $G(V,E)$. In the algorithm, we use vertex $V_i$ to represent row $i$, as well as qubit $q_i$. The $i$-th row can be directly added to the $j$-th row means vertex $V_i$ is connected to $V_j$. We also depict the process in Fig. \ref{fig:algscols}.

Let $k$ be a number such that the summation of the degree of any $k$ vertices in $G(V,E)$ are greater than the total number of qubits $n$.
Algorithm \textbf{SBE} gives an optimized {size} 
\[O\pbra{\frac{n^2}{\log (n/k)}}\leq O\pbra{\frac{n^2}{\log \delta}},\]
as stated in Theorem \ref{thm:TopoAveDegNewVersion}. Theorem \ref{thm:TopoAveDegNewVersion} is the generalization of Theorem \ref{thm:TopoAveDeg}, here we consider the first $k$ minimum degrees of the graph instead of only one minimum degree.

\begin{theorem}\label{thm:TopoAveDegNewVersion}
{Algorithm \ref{alg:nearfutureAlg} can optimize the size of any $n$-qubit CNOT circuits to $O\left(\frac{n^{2}}{\log (n/k)}\right)$ under a connected graph $G(V,E)$ as the limited connectivity architecture, where the summation of the degree of any $k$ vertices in $G$ are greater than $n$.}
\end{theorem}
    \begin{proof}
       Theorem \ref{thm:TopoAveDegNewVersion} holds by Lemma \ref{lem:MinSteinerSize} and Lemma \ref{lem:topologicalSteinerK}.
    \end{proof}

     Lemma \ref{lem:MinSteinerSize} ensures the efficiency of our optimization algorithm, by which we can eliminate one row in one step on average.

 \begin{lemma}\label{lem:MinSteinerSize}{
Given connected graph $G(V,E)$, for any integer $k$ such that the summation of the degree of any $k$ vertices is greater than $n$, the minimum Steiner tree for any $k$ vertices in $G$ is less than $5k$.}
\end{lemma}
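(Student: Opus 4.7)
The plan is to exhibit a Steiner tree for $T$ of size strictly less than $5k$ by a two-phase construction: first span ``clusters'' of nearby terminals, then stitch clusters across the graph, where the clusters come from an auxiliary proximity graph on $T$.

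I would start from the algebraic consequence of the hypothesis,
\begin{equation*}
  \sum_{v \in T} |N[v]| \;=\; k + \sum_{v \in T}\deg(v) \;>\; n + k.
\end{equation*}
Since each $N[v]\subseteq V$ and $|V|=n$, double counting forces substantial overlap between the closed $1$-neighborhoods of the terminals. Define the proximity graph $H$ on vertex set $T$ with $u \sim_H v$ iff $N[u]\cap N[v]\neq\emptyset$, equivalently $d_G(u,v)\le 2$, and let $T_1,\dots,T_c$ be its connected components. Any two terminals from distinct components are at $G$-distance at least $3$, so the closed neighborhoods $N[T_1],\dots,N[T_c]$ are pairwise disjoint in $V$, yielding $\sum_{i=1}^c |N[T_i]| \le n$. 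I would combine this with the previous inequality to simultaneously bound the number of components $c$ and the total inter-cluster ``gap.''

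The construction then has two pieces. Inside each component $T_i$, take a spanning tree of $H$ restricted to $T_i$ with $|T_i|-1$ edges and realize each $H$-edge by a $G$-path of at most two edges; this pays at most $2(k-c)$ $G$-edges in total. For the inter-component part, build a spanning tree on the $c$ contracted components using shortest $G$-paths between them. Summing the two pieces, together with the quantitative bounds coming from the pigeonhole step, should produce a Steiner tree of total size strictly less than $5k$.

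\textbf{Main obstacle.} The hardest part is the inter-component step: a priori two components of $H$ can be arbitrarily far apart in $G$, so bounding the sum of the $c-1$ inter-component shortest paths forces a careful quantitative use of $\sum_i |N[T_i]| \le n$ together with $\sum_{v \in T}|N[v]| > n+k$. Getting the constant $5$ rather than some larger number is where the tightest accounting is needed, presumably by charging each extra Steiner point on an inter-cluster bridge against the corresponding slack in the disjoint-neighborhood inequality, so that intra-cluster cost $2(k-c)$ plus inter-cluster cost comes in just below $5k$.
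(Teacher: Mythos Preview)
Your plan has a genuine gap, and it is exactly the one you flag as the ``main obstacle'': nothing in the two inequalities you extract,
\[
\sum_{v\in T}|N[v]|>n+k \qquad\text{and}\qquad \sum_{i=1}^c|N[T_i]|\le n,
\]
controls the $G$-distance between distinct components $T_i$ of your proximity graph $H$. Both inequalities are statements about closed neighbourhoods of the terminals; an inter-cluster shortest path runs through non-terminal vertices and is invisible to them. The ``slack'' $\sum_{v\in T}|N[v]|-\sum_i|N[T_i]|>k$ you propose to charge against measures only the \emph{intra}-cluster overlap of the $N[v]$'s, so it cannot pay for long bridges. In particular, your accounting gives $2(k-c)$ for the intra-cluster part but no usable bound on the $c-1$ inter-cluster paths, and the argument stalls.

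The missing idea, and what the paper does, is to apply the degree hypothesis not to $T$ but to an auxiliary set $A\subseteq V$ that is \emph{not} restricted to terminals. Build $A$ greedily: start with $u_1\in T$ and repeatedly add any vertex at distance exactly $3$ from the current $A$ until none remains. Then (i) any two vertices of $A$ are at distance $\ge 3$, so their closed neighbourhoods are disjoint and $\sum_{a\in A}\deg(a)+|A|\le n$; since the hypothesis says \emph{any} $k$ vertices have degree sum $>n$, this forces $|A|<k$; (ii) by maximality every vertex of $G$, in particular every terminal, lies within distance $2$ of $A$; and (iii) the greedy order connects each new element of $A$ to the previous ones by a path of length $3$, so $A$ is spanned by a tree on at most $3(|A|-1)+1\le 3|A|$ vertices. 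Attaching each terminal to its nearest point of $A$ costs at most $2$ extra vertices per terminal, giving a Steiner tree of size at most $3|A|+2(k-1)<5k$. The point is that $A$ serves as a short backbone through the \emph{whole} graph, which is precisely what your cluster representatives cannot do.
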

     This Lemma can be obtained directly by applying the technique in Theorem 2.4 of Ali \emph{et al.}~\cite{ali2012upper}. The detailed proof of this lemma is in Supplementary material. The core idea of the proof is that two vertices share no common neighbors if the distance between them equals three. This lemma needs exponential cost to give a minimum Steiner tree, hence we replace it with the 2-approximate minimum Steiner tree in Algorithm \textbf{SBE}, as stated in the following corollary.
     
\begin{corollary}
{
     Given connected graph $G(V,E)$, for any integer $k$ such that the summation of the degree of any $k$ vertices are greater than $n$, the 2-approximate minimum Steiner tree for any $k$ vertices in $G$ is less than $10k$.}
\label{cor:AppSteinerSize}
\end{corollary}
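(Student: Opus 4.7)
The plan is essentially a one-line deduction from Lemma \ref{lem:MinSteinerSize} together with the definition of a 2-approximation. First I would invoke Lemma \ref{lem:MinSteinerSize} on the given $k$ vertices: under the hypothesis that any $k$ vertices of $G$ have total degree exceeding $n$, the lemma furnishes a Steiner tree on those $k$ terminals with fewer than $5k$ edges, so the optimum Steiner tree cost $\mathrm{OPT}$ satisfies $\mathrm{OPT}<5k$.

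Next I would recall that a 2-approximate minimum Steiner tree, by definition, is a tree whose edge count is at most $2\cdot \mathrm{OPT}$. Combining these two facts yields the size bound $2\cdot\mathrm{OPT} < 10k$, which is exactly the stated conclusion.

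Since standard polynomial-time 2-approximation algorithms for the Steiner tree problem (for instance, the classical metric-closure/MST-based algorithm) are well known, there is no obstacle in actually producing such a tree in polynomial time; the corollary is invoked in Algorithm \textbf{SBE} precisely to replace the exponential-cost minimum Steiner tree by a tractable surrogate whose size bound is only worse by a constant factor. The main (and only) substantive content is Lemma \ref{lem:MinSteinerSize}, which is already granted; the corollary itself is a direct consequence and requires no further argument.
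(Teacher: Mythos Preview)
Your proposal is correct and matches the paper's approach: the corollary is stated immediately after Lemma~\ref{lem:MinSteinerSize} precisely as the obvious consequence of doubling the $5k$ bound via the definition of a 2-approximation, and no additional argument is given or needed.
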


The following lemma serves Lemma \ref{lem:topologicalSteinerK}. This lemma can be obtained directly from the optimization process of Nash \emph{et al.}~\cite{nash2019quantum}, by which we can add the value of a qubit to the target qubits and keep the values of the other qubits the same.

\begin{lemma}\cite{nash2019quantum}
{For $S\subseteq[n-1], y\in \{0,1\}^n$, where $y_j = x_j + x_n$ if $j\in S$, and $y_j = x_j$ otherwise. The transformation $\ket{x_1}\cdots \ket{x_n}\rightarrow \ket{y_1}\cdots \ket{y_n}$ on any connected graph can be implemented by CNOT gates with size $O\pbra{n}$. 
\label{lem:SizeSubColSequence}}
\end{lemma}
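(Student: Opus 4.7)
My plan is to reduce the problem to a spanning tree and then propagate $x_n$ by a protocol whose cost is $O(1)$ per tree edge. Since $G$ is connected it contains a spanning tree $T$ with $n-1$ edges; let $T_S$ be the minimal subtree of $T$ spanning $S\cup\{n\}$, i.e.\ the Steiner subtree for $S\cup\{n\}$. Then $|V(T_S)|\le n$, so any protocol that spends $O(1)$ CNOTs per edge of $T_S$ automatically yields the desired $O(n)$ bound. Qubits outside $V(T_S)$ are never touched by the construction and so trivially retain their original values, and the remaining task is to handle the vertices inside $T_S$.

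Inside $T_S$ I would implement the propagation by an Euler tour. Root $T_S$ at $n$ and pick an Euler tour that starts and ends at $n$; it traverses each edge of $T_S$ exactly twice. Treat $x_n$ as a ``token'' that rides along the tour, moving along each traversed edge via a SWAP gate (three CNOTs). Whenever the tour first arrives at a vertex $v\in S$, insert one additional CNOT that XORs the token value (which is sitting on qubit $v$ at that instant) into the qubit currently holding the original value $x_v$, namely the neighbour from which the token just swapped in. The total cost is $2(|V(T_S)|-1)$ SWAPs plus at most $|S|$ extra CNOTs, i.e.\ $O(n)$ gates.

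The main correctness check, and the main obstacle, is verifying that every vertex outside $S$ recovers its original value. For this, observe that if no extra CNOTs were inserted, the SWAPs alone are self-inverse and traverse each edge an even number of times, so they compose to the identity permutation on the qubits; every qubit, Steiner points included, would then recover its original value. Inserting the extra CNOT at each $v\in S$ adds exactly one copy of $x_n$ into $x_v$ and touches nothing else, which is exactly the transformation demanded by the lemma. The bookkeeping needed to justify this requires a small invariant stating that, after processing the $k$-th edge of the tour, the qubits hold a specific permutation of the original values together with a $+x_n$ correction on each previously visited element of $S$; this invariant is straightforward to check by induction on $k$, and the lemma then follows.
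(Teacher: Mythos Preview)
Your argument is correct and yields the claimed $O(n)$ bound, but it follows a genuinely different route from the one the paper invokes. The paper does not prove this lemma itself; it cites Nash \emph{et al.}, whose method (and the closely related Steps 4--5 and 8--9 of Algorithm \textbf{ROWCOL}) works with \emph{CNOTs only}: one rooted Steiner tree and two traversals, a first pass that ``pre-cancels'' the Steiner (non-target) nodes and a second pass that propagates the root value along every parent--child edge. That scheme spends at most two CNOTs per tree edge, so roughly $2n$ gates. Your Euler-tour token-passing scheme spends $2(|V(T_S)|-1)$ SWAPs plus $|S|$ extra CNOTs, i.e.\ about $7n$ gates --- a larger constant, but with a correctness argument that is pleasantly uniform: the SWAP layer is a permutation that is the identity, and each inserted CNOT tags exactly one $x_v$ with $x_n$ without disturbing the permutation.

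One point to tighten: the sentence ``the SWAPs \ldots\ traverse each edge an even number of times, so they compose to the identity permutation'' is not sufficient as stated --- transpositions do not commute, so an even count per edge does not by itself give the identity (e.g.\ $(n\,a)(a\,b)(n\,a)(a\,b)\neq\mathrm{id}$). What actually makes it work is the \emph{nested} structure of a DFS-style Euler tour on a tree: the edge $(p,v)$ is the first and last edge bracketing the entire walk through $v$'s subtree, so induction on subtrees gives the identity. This is exactly what your proposed invariant would establish; just make sure the write-up appeals to that nesting rather than to parity alone.
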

By this lemma, we can eliminate one column of the matrix with $O(n)$ row additions.

\begin{lemma}
  \label{lem:topologicalSteinerK}
 {Algorithm \ref{alg:nearfutureAlg} can optimize the size of any $n$-qubit CNOT circuits to $O\left(\frac{n^{2}}{\log (n/k)}\right)$ under a connected graph $G(V,E)$ as the limited connectivity architecture, where any $k$-vertex set generate a 2-approximate Steiner tree.}
\end{lemma}

\begin{proof}
Let $s = \log (n/k)/2$.
Given $n$-qubit CNOT circuit  $\Mmat\in\GL(n,2)$, the following algorithm uses $O\left({n^2}/{s}\right)$ row additions to transform $\Mmat$ to $\Imat$ and thus gives an equivalent $O\left({n^{2}}/{s}\right)$ size CNOT circuit.
 The algorithm starts with dividing $\Mmat$ into $n/s$ blocks $\left[\Mmat_{1}\cdots \Mmat_{n/s}\right]$, where $\Mmat_i \in \Fbb_{2}^{n\times s}$. Similarly let $\Imat = \left[\Imat_{1}\cdots \Imat_{n/s}\right]$.
Next transform $\Mmat_j$ to $\Imat_{j}$ for all of $j\in \sbra{n/s}$, as shown in Fig. \ref{fig:algscols}.

\begin{figure}[htbp]
    \centering
 \includegraphics[width =0.5\textwidth]{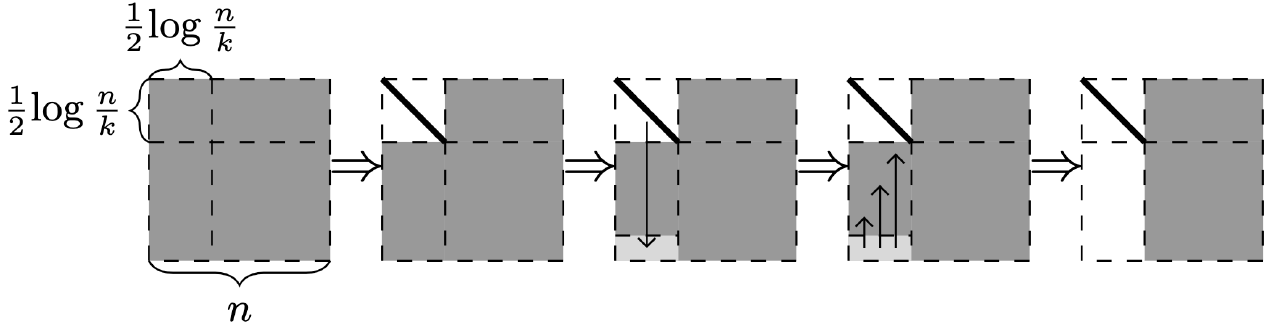} 
  	\caption{{The elimination process of Algorithm \ref{alg:nearfutureAlg}.}}
    \label{fig:algscols}
\end{figure}

Algorithm \textbf{SBE} states how to transform $\Mmat_1$ to $\Imat_1$ with row additions.
The process of transforming $\Mmat_j$ to $\Imat_j$ for $j> 1$ are almost the same with the process of transforming $\Mmat_{1}$ to $\Imat_{1}$, except in step (1) of Algorithm \textbf{SBE}, we need to change
$\Mmat^{(1)}$ into the $(j-1)s + 1$-th to $js$-th rows for input matrix $\Mmat_j$, and $\Mmat^{(2)}$ be the rest rows.

Now we prove the number of row additions is indeed $O\left(n^2/s\right)$. 
Since any $k$ vertices have an $O(k)$-size 2-approximate minimum Steiner tree, the number of vertices in 2-approximate minimum Steiner trees in Steps \ref{step:ini_steiner} and \ref{step:while_inner2} in Algorithm \textbf{SBE} are both $O(k)$. Hence the number of row additions is equal to
    \begin{align*}
        \underbrace{O(s \cdot k)}_{\text{{Steps (\ref{step:for_j_s}-\ref{step:elimi})}}} +
        \underbrace{O(s^2\cdot k)}_{\text{{Steps (\ref{step:j_for_sec}-\ref{step:elimit_sec})}}} +
        \underbrace{O(2^{s} \cdot k + n)}_{{\text{Steps (\ref{step:for_i_exp}-\ref{step:while_inner2})} }}= O(n)
    \end{align*}
  for $k\leq n$. 
  Since there are in total $n/s$ such blocks, we need
    $n/s\times O(n) = O\left({n^2}/s\right)$ row additions.
    \end{proof}

The optimized size in Theorem \ref{thm:TopoAveDegNewVersion} is asymptotically tight when $k = n^\varepsilon$ for $0\leq\varepsilon<1$, \emph{i.e.}, the summation of the degree of any $n^{\epsilon}$ vertices are greater than $n$, since $\Omega\pbra{n^2/\log n}$ is the lower bound for any {limited connectivity architecture} by Theorem \ref{thm:LowerboundSize}.
 
 {\subsection{Lower bound for size optimization}
 }
 
 The best lower bound of { full connectivity architecture} CNOT circuit synthesis is $\Omega(n^2 / \log n)$ size by Patel \emph{et al.}~\cite{patel2008optimal}. This lower bound is obtained by counting the number of distinct CNOT circuits with the given number of CNOT gates, which also implies the same lower bound $\Omega(n^2 / \log n)$ for CNOT circuit synthesis on a {limited connectivity architecture}. 
We say the quantum circuit $\hat{\Umat}\in \Cbb^{2^n\times 2^n}$ is an $\varepsilon$-approximation for the quantum circuit $\Umat\in \Cbb^{2^n\times 2^n}$ if $\vabs{\hat{\Umat} - \Umat}_2\leq \varepsilon$.
  
Here we prove a tighter lower bound  as stated in Theorem~\ref{thm:LowerboundSize}. The technique of the proof is inspired by the counting method from Christofides~\cite{christofides2014asymptotic} and Jiang \emph{et al.}~\cite{jiang2019optimal}. 

 \begin{theorem}
For any connected graph $G(V,E)$ as the {limited connectivity architecture} of the quantum system, there exist some $n$-qubit CNOT circuits for which any $\varepsilon< 1/\sqrt{2}$-approximation 2-qubit circuits need $\Omega\pbra{n^2/ \log \Delta}$ size of CNOT gates on graph $G$, where $\Delta$ is the maximum degree of the {limited connectivity architecture}.
\label{thm:LowerboundSize}
\end{theorem}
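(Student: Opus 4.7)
The plan is to promote the counting argument of Lemma \ref{lem:LowerBoundSizeCnot} from exact CNOT implementations to $\varepsilon$-approximations by arbitrary $2$-qubit circuits. The first key ingredient is a separation lemma for CNOT circuits: every $n$-qubit CNOT circuit acts as a permutation on the computational basis. If $\Umat \neq \Umat'$ are two distinct CNOT circuits, then there exists some $x \in \{0,1\}^n$ with $\Umat\ket{x}$ and $\Umat'\ket{x}$ being distinct and hence orthogonal basis states, so $\vabs{(\Umat-\Umat')\ket{x}}_2 = \sqrt{2}$ and therefore $\vabs{\Umat-\Umat'}_2 \geq \sqrt{2}$. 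By the triangle inequality, for any $\varepsilon < 1/\sqrt{2}$ no single unitary $\hat\Umat$ can simultaneously $\varepsilon$-approximate two distinct CNOT circuits. Since $|\GL(n,2)| \geq 2^{n(n-1)/2}$, it follows that the set of operator-norm $\varepsilon$-equivalence classes of $2$-qubit circuits realizable on $G$ with $k$ gates has cardinality at least $2^{n(n-1)/2}$.

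Next, I upper bound this cardinality. The combinatorial skeleton of the circuit (edge positions and the canonical block structure of Definition \ref{def:CANONICAL}) is controlled by exactly the argument of Lemma \ref{lem:Upper}, giving a factor of $8^{k} e^{k} \Delta^{k} 2^{n\log n}$. To handle the continuous parameters of each $2$-qubit gate, I invoke a standard $\eta$-net $N_\eta \subseteq U(4)$ in the operator norm, which has size $|N_\eta| \leq (c/\eta)^{16}$ since $U(4)$ is a $16$-dimensional compact manifold. Setting $\eta = \varepsilon/k$ and telescoping via sub-multiplicativity of the operator norm, every $k$-gate $2$-qubit circuit on $G$ is $\varepsilon$-close to one whose individual gates all come from $N_\eta$. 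Hence the number of $\varepsilon$-equivalence classes is at most $8^{k} e^{k} \Delta^{k} 2^{n\log n} \cdot (ck/\varepsilon)^{16k}$.

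Combining the lower bound from the separation step with this upper bound and taking logarithms gives
\[
k\bigl(\log\Delta + O(\log(k/\varepsilon))\bigr) + n\log n \;\geq\; \tfrac{n(n-1)}{2},
\]
which rearranges to $k = \Omega(n^{2}/\log\Delta)$ for any constant $\varepsilon < 1/\sqrt{2}$, provided $\log k = O(\log\Delta)$ (the opposite regime corresponds to $\Delta$ polynomially related to $n$, where $\log\Delta = \Theta(\log n)$ and the two bounds coincide). The main obstacle I anticipate is the $\log(k/\varepsilon)$ overhead introduced by the $\eta$-net: a naive application only yields $\Omega(n^{2}/(\log\Delta+\log n))$, which is weaker than the stated bound when $\Delta$ is extremely small. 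Closing this gap—either by a refined discretization that exploits the discrete image structure of CNOT circuits so one need not resolve each intermediate gate at scale $\varepsilon/k$, or by exploiting canonical-form constraints to reduce the effective continuous degrees of freedom per block rather than per gate—is the technical heart of the argument. The remaining ingredients (the separation lemma, the $U(4)$ parameter count, and the canonical-form bookkeeping) are routine generalizations of material already developed in Lemmas \ref{lem:LowerBoundSizeCnot} and \ref{lem:Upper}.
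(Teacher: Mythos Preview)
Your approach is essentially the paper's own. Both arguments (i) use the separation fact that two distinct CNOT unitaries send some basis vector to orthogonal basis vectors, hence are at operator-norm distance $\sqrt{2}$, so for $\varepsilon<1/\sqrt{2}$ no single circuit can $\varepsilon$-approximate two different CNOT circuits; (ii) discretize each $2$-qubit gate at scale $\eta\sim\varepsilon/k$ so that a $k$-gate product is perturbed by at most $\varepsilon$; and (iii) multiply the canonical-form count of Lemma~\ref{lem:Upper} by the per-gate net size. The only cosmetic difference is that the paper discretizes all $32$ real entries of a $4\times 4$ complex matrix (obtaining $(2/\eta)^{32}$ choices per gate) rather than using a metric net on $U(4)$ of size $(c/\eta)^{16}$; this changes nothing asymptotically.

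The obstacle you flag is genuine, and the paper does \emph{not} remove it. After setting $\eta=\varepsilon/(2k)$ the paper's upper bound is $8^{k}e^{k}\Delta^{k}2^{n\log n}(4k/\varepsilon)^{32k}$, and it concludes $k=\Omega(n^{2}/\log\Delta)$ without addressing the $32k\log(4k/\varepsilon)$ term. So your proposal already reproduces the paper's proof in full; your reservation that the argument, as written, only delivers $\Omega\bigl(n^{2}/(\log\Delta+\log n)\bigr)$---which agrees with $\Omega(n^{2}/\log\Delta)$ exactly when $\Delta=n^{\Omega(1)}$---is a fair reading of both proofs. There is no additional idea in the paper that closes this gap; you need not search for one to match the paper's argument.
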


 Before giving the proof of Theorem \ref{thm:LowerboundSize}, we first define the canonical CNOT circuit.

   \begin{definition}[Canonical CNOT circuit]
    	  For any $n$-qubit CNOT circuits with $k$ CNOT gates, which can be represented as a sequence of elementary row operations, $R_1,R_2, \dots, R_k$. We say it is  canonical if and only if the sequence can be partitioned into several non-empty blocks $G_1, G_2,\dots, G_s$ and these blocks satisfy the following properties,
    	   \begin{enumerate}
    	  	\item For $1\leq i\leq s$, the index set of matrices in  block $G_i$ is disjoint with each other;
    	  	\item For $2\leq i\leq s$,  for every matrix in the block $G_i$, there exists at least one element of its index set belonging to the index set of some matrix in the previous block $G_{i-1}$.
    	  \end{enumerate}
      \label{def:CANONICAL}
    \end{definition}

 Intuitively, the canonicity  means that  CNOT gates in the same block can be executed simultaneously  and any CNOT gate in the $G_i$ block can be put into the previous block $G_{i-1}$. It is simple to prove that any CNOT circuit can be transformed into an equivalent canonical CNOT circuit  within finite steps and the readers can refer to it \cite{christofides2014asymptotic} for specific proof. Next, we will show the upper bound of the number of distinct canonical CNOT circuits as stated in Lemma \ref{lem:Canonical}.

 \begin{lemma}
  Given the  {limited connectivity architecture} $G(V,E)$, there are at most  $ 8^k e^{k} \Delta^{k} 2^{n\log n}$ different canonical  $n$-qubit CNOT circuits with $k$ CNOT gates, where  $\Delta$ is the maximum degree of the graph.
   	 \label{lem:Canonical}
 \end{lemma}  

We postpone the proof of this lemma to Supplementary material. In the following, we give the proof of Theorem \ref{thm:LowerboundSize} by Lemma \ref{lem:Canonical}.

\begin{proof}[Proof of Theorem \ref{thm:LowerboundSize}]
For a real number $a\in [-1,1]$, let the $\eta$ discretization of $a$ be $a^{\eta}=\eta \floor{\frac{a}{\eta}}$. Then there are in total $\frac{2}{\eta}$ different $\eta$ discretizations for all the continuous number $a$ in the interval $[-1,1]$ with $\abs{a^\eta - a}\leq \eta$. Hence there are in total $\pbra{\frac{2}{\eta}}^{32}$ different 
$\eta$ discretizations for all the 2-qubit unitaries $\Umat$ in $\Cbb^{4\times 4}$, and the error $\vabs{\Umat^\eta - \Umat}\leq 2\eta$.

In the following we prove that when $\eta = \frac{\varepsilon}{2k}$ and $\varepsilon<\frac{1}{\sqrt{2}}$, the $\eta$ discretizations of any two different CNOT circuits with size $k$ are different. 
Since for any two different CNOT circuits $\Umat_f, \Umat_g$ with size $k$ such that
\[
\Umat_f \ket{x} \ne \Umat_g\ket{x},
\]
we have
\[
\vabs{\Umat_f \ket{x} - \Umat_g \ket{x}}_2= \sqrt{2}.
\]
By the definition of $\Umat_f^\eta$, we have
\[\vabs{\pbra{\Umat_f^\eta - \Umat_f}\ket{x}}_2\leq 2k\eta \leq \varepsilon.
\]
Similarly we have $\vabs{\pbra{\Umat_g^\eta-\Umat_g}\ket{x}}_2\leq \varepsilon$. Since our approximation needs to be $\varepsilon<\frac{1}{\sqrt{2}}$ close to the original solution, this implies
\begin{equation}
    \Umat_f^\eta \ket{x} \ne \Umat_g^\eta \ket{x}.
    \label{eq:countingError}
\end{equation}

Hence by Lemma \ref{lem:Canonical}, we have an upper bound for the number of the 2-qubit circuit with $\varepsilon<\frac{1}{\sqrt{2}}$ approximation to all of the possible $k$ CNOT gates
\[8^k e^{k} \Delta^{k} 2^{n\log n}\pbra{\frac{2}{\eta}}^{32k}\]
with error $2k\eta$. 
Since Eq. \eqref{eq:countingError} is greater than all possible CNOT circuits, which is greater than $2^{n(n-1)/2}$, then we have
$k=\Omega\pbra{n^2/\log \Delta}$.
\end{proof}

\section{Size optimization on near-term quantum devices}
\label{sec:sizeOpt_near}
Since in the near-term quantum superconducting device, the degree of vertices on the {limited connectivity architecture} is very low, the size optimization algorithm in Sec. \ref{sec:sizeOpt_gen} gives $O\pbra{n^2}$ optimized size, which is optimal in the order. Nevertheless, the algorithm is more complex and it has a larger constant for the optimized size compared to the algorithm of Kissinger \emph{et al.}~\cite{kissinger2019cnot} and Nash \emph{et al.}~\cite{nash2019quantum}.

{The algorithms of both Kissinger \emph{et al.} and Nash \emph{et al.} aim to eliminate the $n\times n$ boolean matrix $M$ of the given CNOT circuit into identity. They both first eliminate the matrix into an upper triangular matrix by eliminating all of the ones into zeros under the diagonal line column by column, and then utilize the same method to eliminate the ones above the diagonal line. To eliminate all of the ones into zeros under the diagonal line in column $j$,
they both generate the {limited connectivity architecture} $G$ with the value of the $i$-th vertex being $M_{ij}$. Then they generate a Steiner tree with the terminals which have value one, and utilize two slightly different methods to eliminate all terminals into zero.}

In this section, we propose another size optimization algorithm that gives $2n^2$ optimized size on any connected graph in the worst case. We also show that this algorithm has a better performance compared with the existing algorithms~\cite{kissinger2019cnot,nash2019quantum}.

\subsection{Size optimization algorithm}
\label{subsec:sizeNearTerm}

  {For the ``elimination process" that transforms a matrix to identity by row operations under a {limited connectivity architecture}}, we cannot add a row to another if their corresponding vertices are not adjacent.
    Given the  {limited connectivity architecture} $G$ and the matrix $\Mmat\in \GL(n,2)$ corresponding to a CNOT circuit, in contrast to the algorithms of Kissinger \emph{et al.}~\cite{kissinger2019cnot} and Nash \emph{et al.}~\cite{nash2019quantum}, we propose an algorithm that eliminates the $i$-th row and $i$-th column simultaneously for vertex $i\in V$ which is not a cut vertex. 
The optimized size of the algorithm achieves $2n^2$ in the worst case on any connected {limited connectivity architecture}, as stated in the following theorem.

\begin{theorem}\label{thm:TopoSize2n}
{Algorithm \ref{alg:rowcol} can optimize the size of any CNOT circuits to  $2n^{2}$ under a connected graph $G(V,E)$ as the  limited connectivity architecture}.
\end{theorem}

Algorithm \textbf{ROWCOL} ensures the correctness of Theorem \ref{thm:TopoSize2n}. In Algorithm \textbf{ROWCOL}, Steps (3-6) aim to transform the $i$-th column into $e_i$ and Steps (7-10) aim to transform the $i$-th row into $e_i^T$. All arithmetic operations are over the binary field $\Fbb_2$. {An illustrative example is  shown in Example 1. The {limited connectivity architecture} for the CNOT circuit in Example 1 is depicted in Fig. \ref{fig:stairCNOT}. The optimized CNOT circuit for this inevitable matrix is the inverse of the joint circuit generated from steps (1) to (8).}

\begin{algorithm}[htbp]
\SetKwInOut{Input}{input}\SetKwInOut{Output}{output}
\Input{Integer $n$, matrix $\Mmat \in \Fbb_2^{n \times s}$, graph $G(V,E)$ where $|V|=n$.}
\Output{Row additions to transform $\Mmat$ into $\bm \Imat$.}

 \For{ $i\in V$ which is not a cut vertex}{
       \qquad \emph{$S=\{j|M_{ij\neq 0}\}\cup\{i\}$\;}
       \qquad \emph{Find a tree $T$ containing $S\subseteq V$ in graph $G$\;}
       \qquad  \emph{Postorder traverse $T$  from $i$. When reaching $j$ with parent $k$, add row $j$ to row $k$ if $\Mmat_{ji}=1$ and $\Mmat_{ki}=0$\;}    
     \qquad	\emph{Postorder traverse $T$ from $i$, add every row to its children when reached\;}
     \qquad	\emph{Let $S'\subseteq V$ that $\sum_{j\in S'}M_j=M_i+e_i$\;}
     \qquad	\emph{Find a tree $T'$ containing $S'\cup{i}$\;}
     \qquad	\emph{Preorder traverse  $T'$ from $i$. When reaching $j\notin S'$, add the $j$-th row to its parent\;}
     \qquad	\emph{Postorder traverse $T'$ from $i$, add every row to its parent when reached\;}
     \qquad	\emph{Delete $i$ from graph $G$\;}
    }
\caption{\textbf{(ROWCOL)} Near-term size optimization algorithm}
\label{alg:rowcol}
\end{algorithm}

\begin{figure*}[htbp]
\caption*{{Example 1: An illustration of Algorithm \textbf{ROWCOL}. In each block, we eliminate the red row or column on the left by leverage of the right side of the CNOT circuit.}}
\begin{tabularx}{\textwidth}{|Xm{4cm}|Xm{4cm}|}
\hline 
\small\textbf{1)} &\small\hfill &\small\textbf{2)}  & \small\hfill\\
$ \begin{pmatrix} 
\textcolor{red}{1} & 1 & 0 & 1 & 1  \\
\textcolor{red}{0} & 0 & 1 & 1 & 0  \\
\textcolor{red}{1} & 0 & 1 & 0 & 1  \\
\textcolor{red}{1} & 1 & 0 & 1 & 0  \\
\textcolor{red}{1} & 1 & 1 & 1 & 0  \\
\end{pmatrix}$
&\Qcircuit @C=1em @R=0.7em {
      \lstick{ 1} & \qw & \qw & \ctrl{4} & \qw\\
      \\
      \lstick{ 2} & \qw & \qw & \qw & \qw\\
      \lstick{ 3} & \targ & \qw & \qw & \qw\\
      \lstick{ 4} & \ctrl{-1} & \ctrl{1} & \targ & \qw\\
      \lstick{ 5} & \qw & \targ & \qw & \qw\\
      }&
$\left( \begin{array} {cccccc}
\textcolor{red}{1} & \textcolor{red}{1} & \textcolor{red}{0} & \textcolor{red}{1} & \textcolor{red}{1} \\
0 & 0 & 1 & 1 & 0 \\
0 & 1 & 1 & 1 & 1 \\
0 & 0 & 0 & 0 & 1 \\
0 & 0 & 1 & 0 & 0 
\end{array} \right)$
&\Qcircuit @C=1em @R=0.7em {
      \lstick{ 1} & \targ & \qw & \qw & \targ &\qw\\
      \lstick{ 2} & \qw & \qw & \qw & \qw &\qw\\
      \lstick{ 3} & \qw & \ctrl{1} & \qw & \qw &\qw\\
      \lstick{ 4} & \ctrl{-3} & \targ & \targ & \ctrl{-3} &\qw\\
      \lstick{ 5} & \qw & \qw & \ctrl{-1} & \qw &\qw\\
      \\
      \\
      }\\
\hline 
\small\textbf{3)} &\hfill &\small\textbf{4)}  & \small\hfill \\
$\left( \begin{array} {ccccc}
1 & 0 & 0 & 0 & 0  \\
0 & \textcolor{red}{0} & 1 & 1 & 0 \\
0 & \textcolor{red}{1} & 1 & 1 & 1  \\
0 & \textcolor{red}{1} & 0 & 1 & 0  \\
0 & \textcolor{red}{0} & 1 & 0 & 0  \\
\end{array} \right)$
&\Qcircuit @C=1em @R=0.7em {
      \lstick{ 1} & \qw & \qw & \qw & \qw\\
      \\
      \lstick{ 2} & \targ & \qw & \ctrl{1} & \qw\\
      \lstick{ 3} & \ctrl{-1} & \ctrl{1} & \targ & \qw\\
      \lstick{ 4} & \qw & \targ & \qw & \qw\\
      \\
      \lstick{ 5} & \qw & \qw & \qw & \qw\\
      }&
$\left( \begin{array} {cccccc}
1 & 0 & 0 & 0 & 0  \\
0 & \textcolor{red}{1} & \textcolor{red}{0} & \textcolor{red}{0} & \textcolor{red}{1}  \\
0 & 0 & 1 & 1 & 0  \\
0 & 0 & 1 & 0 & 1  \\
0 & 0 & 1 & 0 & 0  \\
\end{array} \right)$
&\Qcircuit @C=1em @R=0.7em {
      \lstick{ 1} & \qw & \qw & \qw & \qw\\
      \lstick{ 2} & \targ & \qw & \targ & \qw\\
      \lstick{ 3} & \ctrl{-1} & \targ & \ctrl{-1} & \qw\\
      \lstick{ 4} & \targ & \ctrl{-1} & \qw & \qw\\
      \lstick{ 5} & \ctrl{-1} & \qw & \qw & \qw\\
      \\
      \\
      }\\
\hline 
\small\textbf{5)} &\hfill &\small\textbf{6)}  & \small\hfill \\
$\left( \begin{array} {ccccc}
1 & 0 & 0 & 0 & 0  \\
0 & 1 & 0 & 0 & 0  \\
0 & 0 & \textcolor{red}{1} & 1 & 1  \\
0 & 0 & \textcolor{red}{0} & 0 & 1  \\
0 & 0 & \textcolor{red}{1} & 0 & 0  \\
\end{array} \right)$
&\Qcircuit @C=1em @R=0.7em {
      \lstick{ 1} & \qw & \qw & \qw & \qw\\
      \\
      \lstick{ 2} & \qw & \qw & \qw & \qw\\
      \\
      \lstick{ 3} & \qw & \qw & \ctrl{1} & \qw\\
      \lstick{ 4} & \targ & \ctrl{1} & \targ & \qw\\
      \lstick{ 5} & \ctrl{-1} & \targ & \qw & \qw\\
      }&
$\left( \begin{array} {cccccc}
1 & 0 & 0 & 0 & 0  \\
0 & 1 & 0 & 0 & 0  \\
0 & 0 & \textcolor{red}{1} & \textcolor{red}{1} & \textcolor{red}{1} \\
0 & 0 & 0 & 1 & 0  \\
0 & 0 & 0 & 0 & 1  \\
\end{array} \right)$
&\Qcircuit @C=1em @R=0.7em {
      \lstick{ 1} & \qw & \qw & \qw & \qw\\
      \\
      \lstick{ 2} & \qw & \qw & \qw & \qw\\
      \\
      \lstick{ 3} & \qw & \targ & \qw & \qw\\
      \lstick{ 4} & \targ & \ctrl{-1} & \qw & \qw\\
      \lstick{ 5} & \ctrl{-1} & \qw & \qw & \qw\\
      \\
      \\
      }\\
\hline 
\small\textbf{7)} &\hfill &\small\textbf{8)}  & \small\hfill \\
$\left( \begin{array} {ccccc}
1 & 0 & 0 & 0 & 0  \\
0 & 1 & 0 & 0 & 0  \\
0 & 0 & 1 & 0 & 0  \\
0 & 0 & 0 & \textcolor{red}{1} & 1 \\
0 & 0 & 0 & \textcolor{red}{0} & 1 \\
\end{array} \right)$
&\Qcircuit @C=1em @R=0.7em {
      \lstick{ 1} & \qw & \qw & \qw & \qw\\
      \\
      \lstick{ 2} & \qw & \qw & \qw & \qw\\
      \\
      \lstick{ 3} & \qw & \qw & \qw & \qw\\
      \\
      \lstick{ 4} & \qw & \qw & \qw & \qw\\
      \\
      \lstick{ 5} & \qw & \qw & \qw & \qw\\
      }&
$\left( \begin{array} {cccccc}
1 & 0 & 0 & 0 & 0  \\
0 & 1 & 0 & 0 & 0  \\
0 & 0 & 1 & 0 & 0  \\
0 & 0 & 0 & \textcolor{red}{1} & \textcolor{red}{1} \\
0 & 0 & 0 & 0 & 1 \\
\end{array} \right)$
&\Qcircuit @C=1em @R=0.7em {
      \lstick{ 1} & \qw & \qw & \qw & \qw\\\\
      \lstick{ 2} & \qw & \qw & \qw & \qw\\\\
      \lstick{ 3} & \qw & \qw & \qw & \qw\\\\
      \lstick{ 4} & \targ & \qw & \qw & \qw\\
      \lstick{ 5} & \ctrl{-1} & \qw & \qw & \qw\\
      \\
      \\
      }\\
\hline 
\small\textbf{9)} &\hfill &  & \small\hfill \\
$\left( \begin{array} {ccccc}
1 & 0 & 0 & 0 & 0  \\
0 & 1 & 0 & 0 & 0  \\
0 & 0 & 1 & 0 & 0  \\
0 & 0 & 0 & 1 & 0 \\
0 & 0 & 0 & 0 & 1 \\
\end{array} \right)$
&
&
&
\\
\hline 
\end{tabularx}
\label{tab:RCExample}
\end{figure*}

When we perform CNOT gates in Steps (2-3, 5-6), the number of CNOT gates is less than the number of remaining nodes, hence the total size is at most $4(n-1)+4(n-2)+\dots+4\times 1\le 2n^2$.
    
Since a connected graph must have a vertex that is not a cut vertex and the graph remains connected after that vertex is deleted, this algorithm can be applied to any connected graph. We can BFS (Breadth-First Search) starting from any vertex in the graph and apply the above algorithm in the BFS tree, and then we delete a leaf node each time.

  \begin{figure}[htbp]
	\centering
\includegraphics[width=0.5\textwidth]{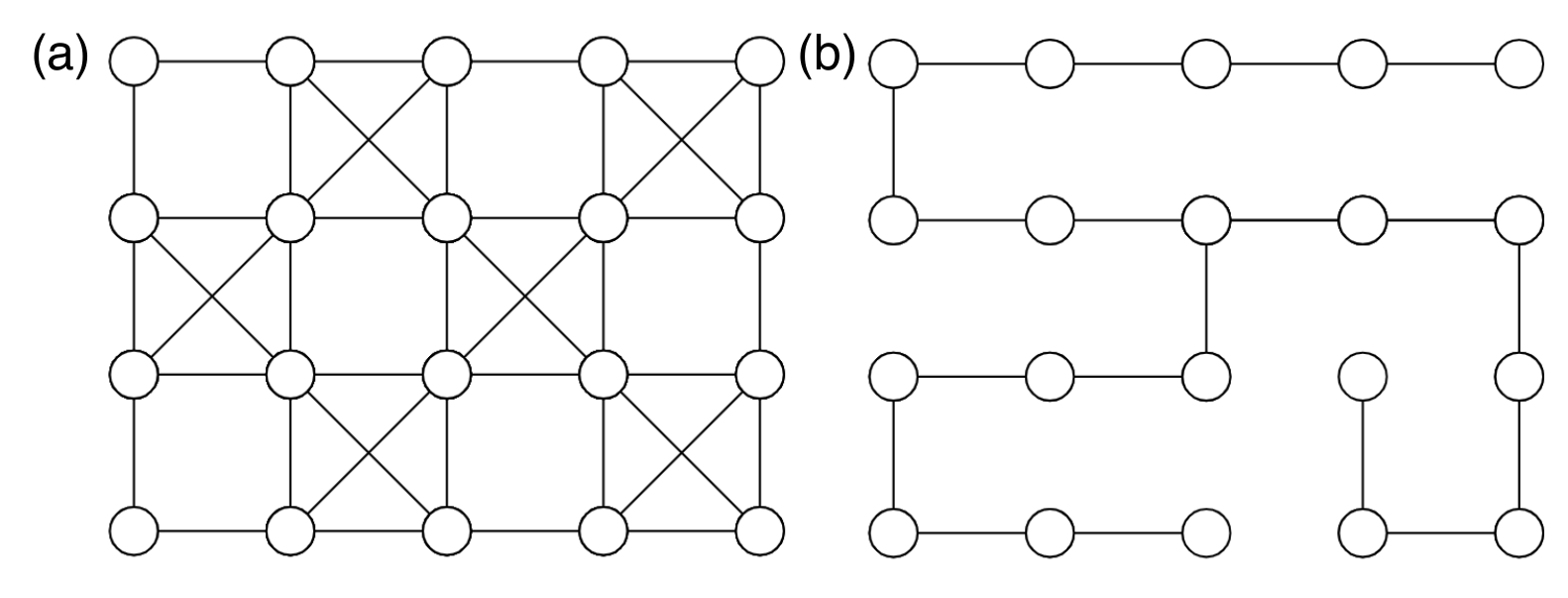}
\caption{The {limited connectivity architecture} (a) IBMQ20 and (b) T20.}
\label{fig:dif_graph}
	\end{figure}
\subsection{Numerical results} \label{subsec:SizeExperiment}

    In this subsection, we give the comparison of the experimental simulation of Algorithm \textbf{ROWCOL} and algorithms in Refs. \cite{kissinger2019cnot,nash2019quantum}.
   The CNOT circuits are performed on the IBM-Q20 graph and T-like graph (T20). The {limited connectivity architecture} of IBM-Q20 and T20 are depicted in Fig. \ref{fig:dif_graph}. 
The original circuit size ranges from 20 to 800 in the experiment, where the number of qubits is chosen to be 20.
We randomly sample 200 CNOT circuits for each input size, and compare the average optimized size for Algorithms \textbf{ROWCOL} and Ref. \cite{kissinger2019cnot,nash2019quantum}, as depicted in Fig. \ref{fig:exp_size}.

Our algorithm performs better in most generated random circuits than the algorithm of Kissinger~\emph{et al.} for the {limited connectivity architecture} with a Hamiltonian path, as shown in Fig~\ref{fig:exp_size} (a).
    In particular, we have a better-optimized size for $82.3\%$ random circuits on IBMQ20, and $99.9\%$ random circuits on the T20 graph.
    When the graph does not have a Hamiltonian path, our algorithm has more obvious advantages than Kissinger \emph{et al.}~\cite{kissinger2019cnot} (as in Fig.~\ref{fig:exp_size} (b)).
   	\begin{figure}[htbp]
        \centering
\includegraphics[width=0.5\textwidth]{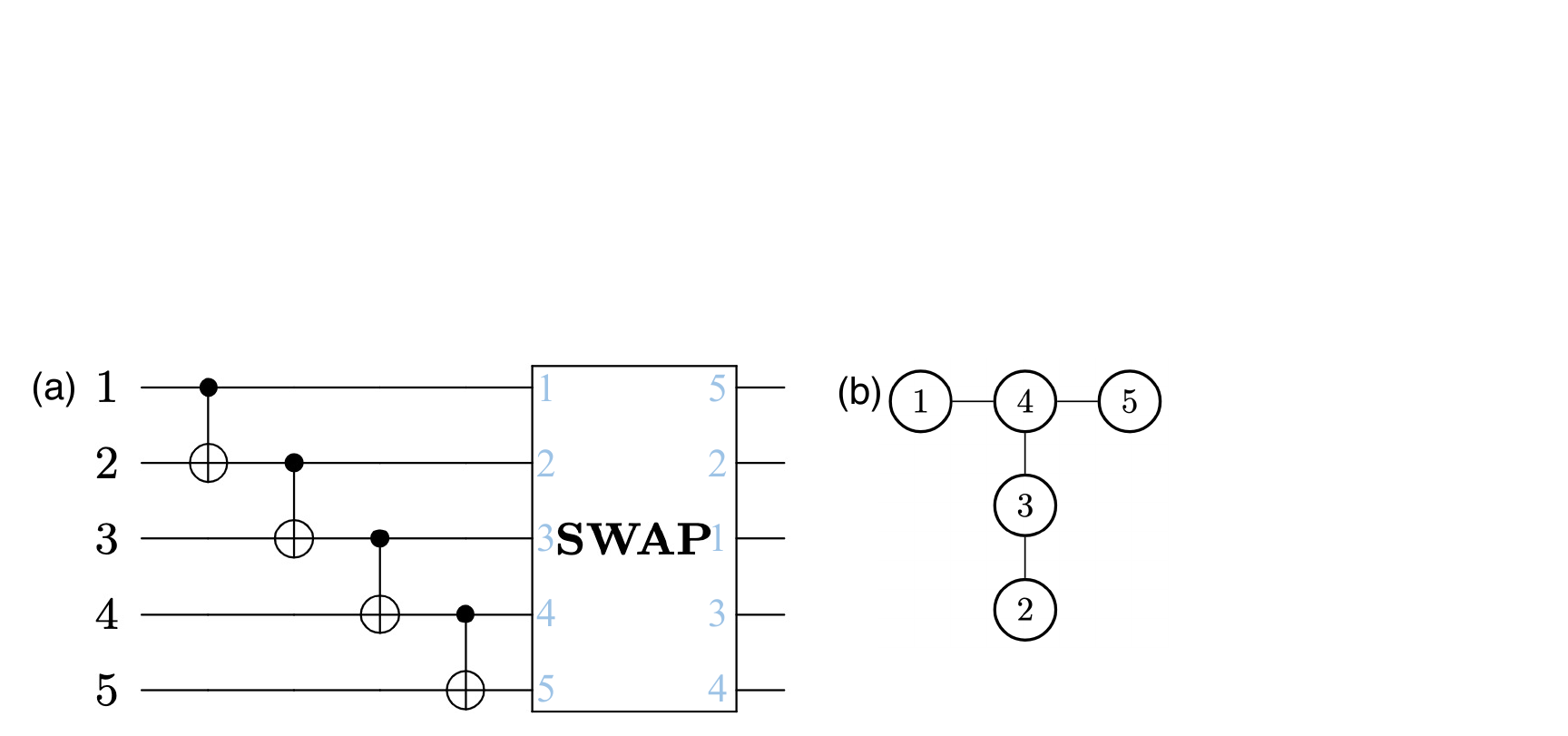}
        \caption{{(a) A staircase circuit followed by a SWAP circuit which swaps inputs $\pbra{x_1,x_2,x_3,x_4, x_5}$ into $\pbra{x_5,x_2,x_1,x_3,x_4}$;
        (b) The {limited connectivity architecture} of the quantum circuit.}}
        \label{fig:stairCNOT}
    \end{figure} 
	
	\begin{figure*}[htbp]
	\begin{center}
\includegraphics[width=0.8\textwidth]{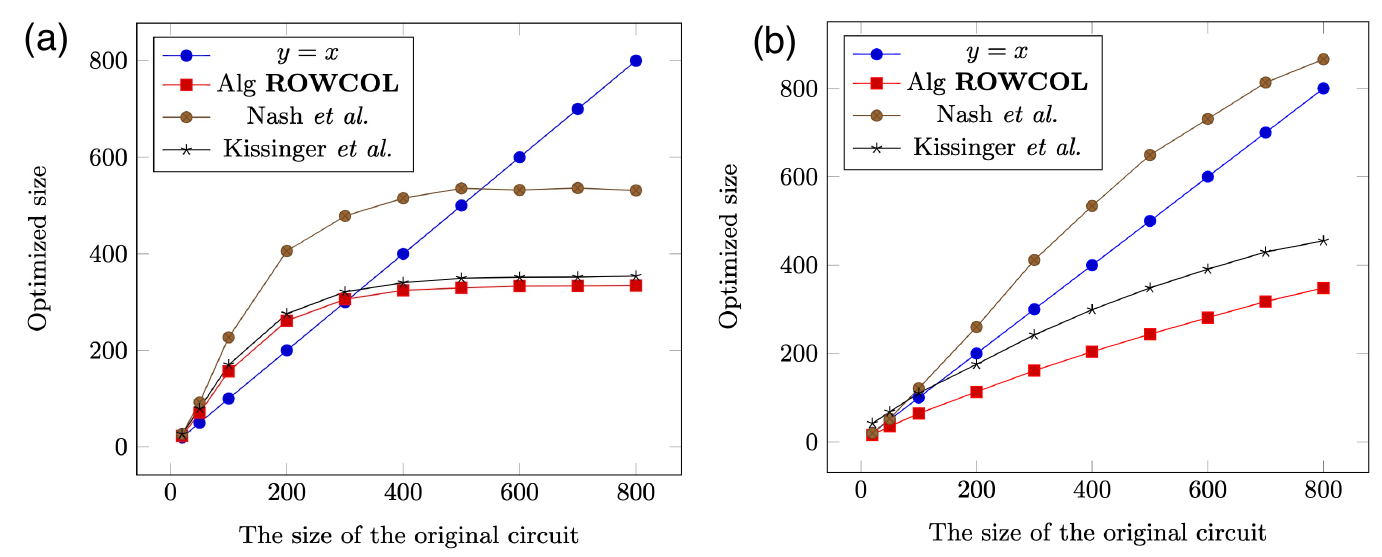}
	\caption{{The experimental results of Algorithm \textbf{ROWCOL}, algorithms in Refs. ~\cite{nash2019quantum} and ~\cite{kissinger2019cnot} under {limited connectivity architecture} graphs: (a) IBM Q20 and (b) T20.
	As a contrast, we also draw the curve $y=x$ in the figure.}}
		\label{fig:exp_size}
		\end{center}
	\end{figure*}

The above-optimized result for the randomly generated CNOT circuits shows the advantages of Algorithm \textbf{ROWCOL} for general CNOT circuits.
In the following, we perform Algorithm \textbf{ROWCOL} on a specific CNOT circuit to show the applicability of our algorithm. The comparison results are coincident with the average case.

{We also give an example to show the advantage of our algorithm, which is a staircase CNOT circuit followed by a randomly generated SWAP circuit (as shown in Fig.~\ref{fig:stairCNOT} (a)) under the {limited connectivity architecture} in Fig. \ref{fig:stairCNOT} (b). 
  Staircase CNOT and SWAP circuits both are crucial in the quantum circuit implementation of
     error detection and correction~\cite{linke2017fault,lu2008experimental,salas2004effect}, simulation of quantum chemistry~\cite{tranter2018comparison,mccaskey2019quantum}, Hamiltonian simulation~\cite{gui2020term} and near-term variational algorithms~\cite{gokhale2019partial}. 
   Fig. \ref{fig:stairCNOTAndSwap} gives a comparison of the optimized circuit between Algorithm \textbf{ROWCOL} and Algorithm in Ref. \cite{nash2019quantum}. The non-trivial optimized CNOT size of Algorithm \textbf{ROWCOL} and Ref. \cite{nash2019quantum} are $10$ and $20$ respectively. }
 
    \begin{figure*}[htbp]
        \centering
        \begin{small}
                    \begin{tabular}{c}
              \Qcircuit @C=0.6em @R=0.6em {
       \lstick{ 1} & \qw & \qw & \qw &
       \ctrl{3} & \qw & \targ & \qw & \qw & \qw & \qw\\
       \lstick{ 2} & \targ & \qw & \qw & \qw & \qw & \qw & \qw & \ctrl{1} & \qw & \qw\\
       \lstick{ 3} & \ctrl{-1} & \targ & \qw & \qw & \ctrl{1} & \qw & \ctrl{1} & \targ & \qw & \qw\\
       \lstick{4} & \qw & \ctrl{-1} & \ctrl{1} & \targ & \targ & \ctrl{-3} & \targ & \ctrl{1} & \targ & \qw\\
       \lstick{ 5} & \qw & \qw & \targ & \qw & \qw & \qw & \qw & \targ & \ctrl{-1} & \qw
       } 
     \\
            (a)
        \end{tabular}
      \quad
\begin{tabular}{c}
    \Qcircuit @C=0.6em @R = 0.6em{
    \lstick{ 1} & \qw & \qw & \qw & \targ & \qw & \qw & \qw  & \targ & \qw & \targ & \qw  & \targ & \targ & \targ & \qw & \targ & \qw & \ctrl{3} & \qw & \qw & \qw & \qw\\ 
  \lstick{ 2} & \qw & \ctrl{1} & \qw & \qw & \qw & \ctrl{1} & \qw  & \qw & \qw & \qw & \qw  & \qw& \qw & \qw & \qw  & \qw &\ctrl{1} & \qw & \qw & \qw  & \targ & \qw\\ 
  \lstick{ 3} & \ctrl{1} & \targ & \ctrl{1} & \qw & \ctrl{1} & \targ & \ctrl{1}  & \qw & \ctrl{1} & \qw & \ctrl{1}   & \qw& \qw & \qw & \qw  & \qw & \targ & \qw & \qw & \targ & \ctrl{-1} & \qw\\ 
   \lstick{ 4} & \targ & \qw & \targ & \ctrl{-3} & \targ & \qw & \targ  & \ctrl{-3} & \targ & \ctrl{-3} & \targ &\ctrl{-3} & \ctrl{-3} & \ctrl{-3} & \targ & \ctrl{-3} & \ctrl{1} & \targ & \ctrl{1} & \ctrl{-1} & \qw &  \qw\\ 
 \lstick{5}& \qw & \qw & \qw  & \qw & \qw & \qw & \qw  & \qw & \qw & \qw & \qw  & \qw & \qw & \qw  & \ctrl{-1} & \qw & \targ &\qw & \targ & \qw & \qw &  \qw
 \gategroup{1}{14}{4}{15}{.7em}{--}
    }    
           \\
            (b)
        \end{tabular}    
        \end{small}

        \caption{{Optimized circuits for the staircase CNOT circuit in Fig. \ref{fig:stairCNOT} (a) under the {limited connectivity architecture} in Fig. \ref{fig:stairCNOT} (c).
        The non-trivial optimized size of Algorithm \textbf{ROWCOL}, Ref. \cite{nash2019quantum} are 10 and 20 respectively, as shown in (a) and (b), where in (b) the two CNOT gates in the dotted box can be trivially canceled hence we did not count them in the comparison.
        }}
        \label{fig:stairCNOTAndSwap}
    \end{figure*}


\section{Depth optimization on the near-term quantum devices}
\label{sec:DepthOpt}

Due to the decoherence with the execution time of the near-term quantum devices, it is meaningful to decrease the execution time for a given quantum circuit.
Although Algorithm \textbf{ROWCOL} can also be used to optimize the depth of any given CNOT circuit, the depth of the optimized circuit is almost the same as the size.
To our knowledge, the ancillas can greatly reduce the depth of a quantum circuit. Much work has aimed to reduce the depth of CNOT circuits via designing optimized circuits with some ancillas~\cite{moore2001parallel,jiang2019optimal,brown2011ancilla}. In this section, we first propose a depth optimization algorithm on the NISQ devices --- grid with limited ancillas, and then show the great improvements of the optimized depth compared to other existing algorithms by numerical experiment.

 \subsection{Depth optimization algorithm}\label{subsec:DepthOptNear}
   Here we optimize the depth of CNOT circuits by bringing in limited ancillas on a $2$ dimensional grid. We then generalize the result to any constant $d$ dimensional grid. 
   
   \begin{theorem}\label{thm:Depth2Grid}
{
Algorithm \ref{alg:depancgrid} can optimize the depth of any CNOT circuit to 
 $O\pbra{\frac{n^2}{\min \cbra{m_1,m_2}}}$ under the $m_1\times m_2$ grid structure where $3n \leq m_1m_2\leq n^2$.}
\end{theorem}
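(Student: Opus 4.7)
The plan is to construct, for any $n$-qubit CNOT matrix $\Mmat\in\GL(n,2)$, an explicit equivalent CNOT circuit on the $m_1\times m_2$ grid with $m_1m_2-n$ ancillas whose depth is $O(n^2/\min\{m_1,m_2\})$. Without loss of generality I assume $m_1\le m_2$, so the target is depth $O(n^2/m_1)$.

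First I would fix a layout: pack the $n$ input qubits into the first $\lceil n/m_1\rceil$ columns of the grid, and use the remaining cells as ancillas initialized to $\ket 0$. The hypothesis $m_1m_2\ge 3n$ ensures at least $2\lceil n/m_1\rceil$ columns of ancilla workspace, providing room for broadcasts and scratch computations adjacent to the input block.

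Second I would perform a Gaussian-style elimination of $\Mmat$ in $\lceil n/m_1\rceil$ super-rounds, each eliminating a batch of $m_1$ columns in parallel. Within a super-round I would assign one of the $m_1$ pivot rows to each physical row of the grid, broadcast each pivot value along its grid row through the ancilla workspace using a linear chain of CNOTs, then execute the required row-addition updates as disjoint parallel layers of CNOTs between the pivot copies and the target cells (routing through adjacent ancilla columns as needed), and finally uncompute the broadcasts. A counting argument then shows each super-round fits into $O(n)$ depth, so the total depth is
\[
\lceil n/m_1\rceil\cdot O(n)\;=\;O(n^2/m_1),
\]
as claimed. The resulting CNOT sequence decomposes into parallelized row-addition matrices in the sense of Sec.~\ref{sec:pre}, whose product equals the matrix that reduces $\Mmat$ to $\Imat$; inverting this gate sequence yields the desired circuit.

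The main obstacle is the scheduling and routing argument inside a single super-round: one must verify that all the broadcasts, parallel XOR updates, and uncomputations can be packed into depth $O(n)$ while honoring the grid adjacency and keeping each layer's index set disjoint (so that each layer is a legitimate parallelized row-addition matrix). I would handle this by splitting into the cases $m_1\le\sqrt n$ (long, thin workspace) and $m_1>\sqrt n$ (nearly square workspace). The nearly square case is the cleaner one and serves as the template: on an $n\times n$ grid one broadcasts each input along its row in depth $O(n)$, computes all $n$ output bits in parallel columns in depth $O(n)$, and uncomputes in depth $O(n)$, giving the tight $O(n)$ total depth that matches the diameter. The long-thin case requires an extra layer of column packing to emulate the same broadcast-compute-uncompute pattern inside an $m_1\times O(n/m_1)$ sub-block; I expect the bulk of the technical work to lie in verifying that this packing still admits the $O(n)$ per-super-round depth bound.
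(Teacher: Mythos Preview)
Your outer structure --- $\lceil n/m_1\rceil$ rounds of $O(n)$ depth each --- matches the paper's, and your sketch of the $n\times n$ case (broadcast inputs along rows, compute each output bit in a dedicated column, uncompute) is essentially the mechanism the paper actually uses. But the general super-round you describe is not that mechanism: you propose \emph{in-place Gaussian elimination}, picking $m_1$ pivot rows and XORing each into its targets. That is where the gap lies, and it is a real one. First, eliminating $m_1$ columns of $\Mmat$ is not parallel as stated: the row operations for pivot $j$ alter columns $j{+}1,\dots,m_1$, so the pivots are sequentially dependent unless you first diagonalize the $m_1\times m_1$ pivot block --- a step you omit. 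Second, and more seriously, ``row-addition updates between the pivot copies and the target cells'' are not nearest-neighbour CNOTs: the copies of pivot $r$ all live in grid row $r$, while the targets are scattered across all $m_1$ grid rows of the input block. Your phrase ``routing through adjacent ancilla columns as needed'' hides exactly the scheduling problem you flag as the main obstacle, and the promised ``counting argument'' for $O(n)$ depth per super-round is never given.

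The paper avoids all of this by computing \emph{out of place} rather than eliminating in place. It never performs row operations to reduce $\Mmat$ to $\Imat$. Instead it realises the map $U_{\Mmat}\colon(x,0,0)\mapsto(x,0,\Mmat x)$ directly: for each input column $l$ it copies the $m_1$ inputs $x_{(l-1)m_1+1},\dots,x_{lm_1}$ across the workspace rows (Lemma~\ref{lem:CopyGrid}), then in batches of $s=m_2-2n/m_1$ output coordinates forms the partial sums $z_{ij}=\sum_{k}\Mmat_{ik}x_k$ by summing down workspace columns (Lemma~\ref{lem:AddGrid}) and accumulates them into a dedicated output block. This gives depth $O\bigl(\tfrac{n}{m_1}\cdot\tfrac{n}{s}\cdot(m_1+m_2)\bigr)=O\bigl(n^2/\min\{m_1,m_2\}\bigr)$ uniformly, with no case split on $m_1$ versus $\sqrt n$. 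The remaining trick --- absent from your proposal --- is how to dispose of the surviving copy of $x$: one applies the same construction with $\Mmat^{-1}$ targeting the input block, obtaining $(x,0,\Mmat x)\mapsto(x\oplus\Mmat^{-1}\Mmat x,0,\Mmat x)=(0,0,y)$, and then swaps $y$ back. Your $n\times n$ sketch is already this idea in embryo; the fix is to drop the Gaussian-elimination framing and carry the broadcast/accumulate/uncompute picture, plus the $U_{\Mmat^{-1}}$ cleanup, through to the general $m_1\times m_2$ grid.
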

 
 This theorem gives a trade-off of depth and ancillas for CNOT circuits under grid {limited connectivity architecture}. The depth can be optimized to $O(n)$ when $m_1=m_2=n$. It is easy to check there exists a CNOT circuit for which the optimized depth is $\Omega\pbra{n}$ on an $n\times n$ grid. Hence our optimized depth is asymptotically tight in this case.
 The main idea for Theorem \ref{thm:Depth2Grid} is to divide the output of the CNOT circuit into several intermediate results conserving the ancillas. Then combine the intermediate results to generate the output and restore the ancillas. 
 
 Before giving the algorithm to show the correctness of Theorem \ref{thm:Depth2Grid}, we would like to cast two lemmas that show how to copy and add inputs under the $d$ dimensional grid, and one can easily check that the lower bound for this operation on  an $(m_1\times m_2\times \cdots \times m_d)$ grid is also $\Omega\pbra{\sum_{j=1}^d m_j}$.

\begin{lemma}
{
Let $i\in[m]$, integer $m>1$ and $\prod_{i = 1}^d m_i = m$.
The copy operation $|0\rangle^{\otimes (i-1)}|x\rangle|0\rangle^{m-i} \rightarrow|x\rangle^m$ on the $(m_1\times m_2 \times \cdots \times m_d)$ grid with $m$ vertices can be implemented by CNOT gates with depth at most $O\pbra{\sum_{j=1}^d m_j}$, where $x\in\{0,1\}$.
\label{lem:CopyGrid}}
\end{lemma}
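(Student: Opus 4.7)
The plan is to copy $x$ axis by axis, exploiting the product structure of the grid. First I would establish the one-dimensional base case: on a line of $\ell$ vertices with $|x\rangle$ at some position $p$ and $|0\rangle$ elsewhere, I alternately apply CNOTs that propagate $x$ one step further to the left and one step further to the right at each time layer. After at most $\max\{p-1,\ell-p\}\le \ell$ parallel layers (each consisting of at most two CNOTs acting on disjoint qubits), every vertex of the line holds $|x\rangle$. Thus the 1D copy runs in depth $O(\ell)$.

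Next I would bootstrap this to the $d$-dimensional grid. Let $(p_1,\ldots,p_d)$ be the grid coordinates of the initial site $i$. I would perform $d$ stages, where in stage $j$ we copy $x$ along the $j$-th axis. In stage $1$ I run the 1D procedure on the axis-parallel line through $(p_1,\ldots,p_d)$ in direction $1$; after this, every vertex of the form $(*,p_2,\ldots,p_d)$ holds $|x\rangle$. Inductively, just before stage $j$ the ``occupied'' set consists of all vertices whose last $d-j+1$ coordinates equal $(p_j,p_{j+1},\ldots,p_d)$. Running the 1D copy procedure on each axis-parallel line in direction $j$ that meets the occupied set (there are $m_1\cdots m_{j-1}$ such lines, each with exactly one occupied endpoint at its $j$-th coordinate $p_j$) extends the occupied set to all sites whose last $d-j$ coordinates equal $(p_{j+1},\ldots,p_d)$. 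Because these lines differ in at least one of the first $j-1$ coordinates, they are pairwise vertex-disjoint, so the CNOTs across different lines can be executed in the same time layer. Stage $j$ therefore runs in depth $O(m_j)$.

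Summing over the $d$ stages yields total depth $O(m_1+m_2+\cdots+m_d)=O\!\left(\sum_{j=1}^d m_j\right)$, as claimed. The only thing that needs verification is that the parallel CNOTs within a stage do not conflict, which is immediate from the disjointness of distinct axis-parallel lines in a given direction, so there is no serious obstacle; the proof is a clean induction on $d$ built on the 1D propagation primitive.
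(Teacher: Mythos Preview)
Your proposal is correct and follows essentially the same axis-by-axis propagation strategy as the paper. The only cosmetic difference is that the paper first relocates the source to the corner $(1,\dots,1)$ using $O(\sum_j m_j)$ SWAP gates and then propagates unidirectionally along each axis, whereas you keep the source in place and propagate bidirectionally within each line; both yield depth $O(\sum_j m_j)$ for the same reason.
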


The following lemma gives a tight $O\pbra{dm^{1/d}}$ depth construction for addition operation on the $d$ dimensional grid. 

\begin{lemma}
{
Let $S\subseteq [m-1]$, integer $m>1, y = \sum_{i\in S} x_i \mod2$, and $\prod_{i=1}^{d} m_{i} = m$.
The addition operation $|x_1\rangle\cdots|x_{m-1}\rangle |0\rangle\rightarrow |x_1\rangle\cdots|x_{m - 1}\rangle|y\rangle$ on the above grid can be implemented by CNOT gates with depth at most $O\pbra{\sum_{i = 1}^{d}m_{i}}$, where $x_i\in\{0,1\}$ and $|y\rangle$ can be arbitrary vertices in the grid.
\label{lem:AddGrid}}
\end{lemma}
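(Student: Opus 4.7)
My plan is to prove the lemma by induction on the dimension $d$, exploiting the product structure of the grid so that the construction in $d$ dimensions reduces to $m_d$ parallel $(d-1)$-dimensional subproblems followed by a single 1D combining step. Throughout, the scheme is a compute-then-uncompute pattern so that the $m-1$ input bits are preserved while only the target vertex is modified.

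For the base case $d = 1$, I show that on a path of $m_1$ vertices the subset-parity $\bigoplus_{i \in S} x_i$ can be XORed into any designated target vertex in depth $O(m_1)$ while restoring all inputs. The idea is a two-sided sweep: starting from the far ends, propagate running parities along the path toward the target vertex with nearest-neighbor CNOTs, perform a single CNOT into the target that deposits $y$, then reverse the sweep so that each intermediate vertex returns to its original bit. Each sweep has depth $O(m_1)$, so the total is $O(m_1)$.

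For the inductive step, given an $m_1 \times \cdots \times m_d$ grid with target at $(p_1^*, \ldots, p_d^*)$, I slice the grid into $m_d$ slabs along the $d$-th axis, each a $(d-1)$-dimensional grid. In every slab $k$ in parallel, I apply the inductive hypothesis to XOR the contribution $\bigoplus_{i \in S_k} x_i$ (where $S_k$ is $S$ restricted to slab $k$) onto the interface vertex $(p_1^*, \ldots, p_{d-1}^*, k)$. Because the slabs are vertex-disjoint, this parallel stage has depth $O(\sum_{i<d} m_i)$. Next, the $m_d$ interface vertices form a straight line along the $d$-th axis, so applying the 1D primitive to this line XORs their sum into the true target in depth $O(m_d)$. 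Finally, I run the slab circuits in reverse in parallel across all slabs, restoring every interface vertex (and every scratch bit used inside the slabs) to its original input value; this adds another $O(\sum_{i<d} m_i)$ depth. The true target is not touched during the reversal, so it retains the value $\bigoplus_{i \in S} x_i$, and summing the three phases yields total depth $O(\sum_{i=1}^d m_i)$.

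The main obstacle will be tracking bit preservation through the recursion, since the interface vertices hold input bits (except possibly the one inside the slab containing the true target, which may hold the initial $|0\rangle$). The key observation is that CNOT is its own inverse and the 1D combining step only reads from the interface vertices as controls for CNOTs targeted at the true target, so inverting the slab operations restores every interface bit without disturbing the accumulated $y$. A small case distinction handles the slab that contains the $|0\rangle$ target, where the inductive call is applied directly with that vertex as its local target instead of being XORed into an existing input. Together with the $d=1$ base case this closes the induction and gives the stated $O(\sum_{i=1}^d m_i)$ depth bound.
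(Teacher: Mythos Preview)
Your inductive decomposition is a legitimate alternative to the paper's direct construction. The paper places the target at a corner via SWAPs, then runs $d$ ``forward'' passes (along each axis, CNOT every vertex \emph{not} in $S$ toward the corner) followed by $d$ ``backward'' passes (along each axis, CNOT every vertex toward the corner); vertices outside $S$ thus get added twice and cancel, leaving the subset-parity at the corner, and a final reversal of all non-target gates restores the inputs. Your recursion is more modular but picks up an extra constant factor from the doubling at each level of the induction.

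There is, however, a real gap in your slab step. For $k\neq k^*$ the interface vertex $j_k$ already carries an input bit $x_{j_k}$, and the recursive call (whose target cannot belong to its own subset) can only XOR a parity of the \emph{other} slab bits into it. If $j_k\in S$ then taking subset $S_k\setminus\{j_k\}$ indeed yields $a_k=y_k$; but if $j_k\notin S$ you are stuck with $a_k=x_{j_k}\oplus y_k$, and the single 1D combining step then deposits $y\oplus\bigoplus_{k:\,j_k\notin S}x_{j_k}$ into the true target rather than $y$. The repair is cheap---after the slab reversal (which restores each interface vertex to $x_{j_k}$) run one additional 1D pass along the interface line with subset $\{k\neq k^*:j_k\notin S\}$ to cancel those stray terms, costing only another $O(m_d)$ depth---but it must be stated. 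You should also explicitly carry through the induction the strengthened hypothesis that the circuit fixes every non-target coordinate regardless of the target's initial value; the lemma as phrased only pins down the action when the target starts in $\ket{0}$, yet your recursion applies the slab circuits to interface vertices holding arbitrary input bits.
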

We prove this lemma by constructing a tree rooted at vertex $y$ with depth $O(\sum_{i=1}^dm_i)$ in the $d$ dimensional grid, then we divide the tree into some disjoint path to parallelize the CNOT gates.
See Supplementary material for the proofs of Lemmas \ref{lem:CopyGrid},\ref{lem:AddGrid}.

	\begin{figure*}[htbp]
\includegraphics[width=1.0\textwidth]{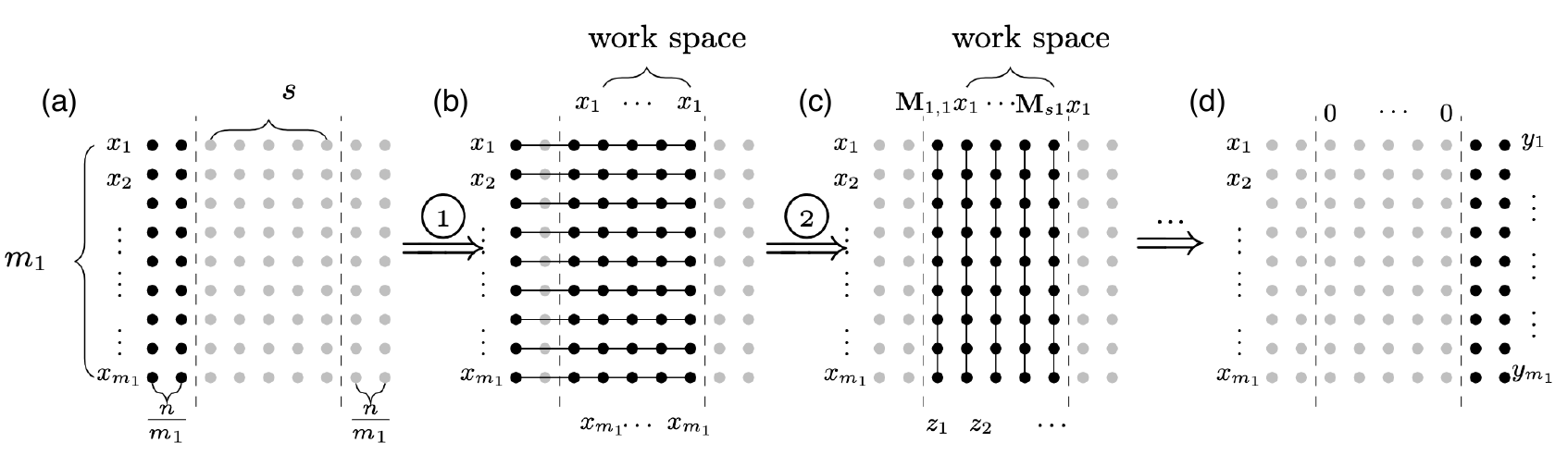}  
		\caption{{The CNOT circuitconstruction process of the algorithm with $O(\frac{n^2}{\min(m_1, m_2)})$ depth and $3n\leq m_1m_2\leq n^2$ ancillas on $m_1\times m_2$ dimensional grid, where $s:=m_2 - \frac{2n}{m_1}$.}}
		\label{fig:2dgrid}
	\end{figure*}
	
	In the following, we give the algorithm for Theorem \ref{thm:Depth2Grid}.
Let $y:= y_1\cdots y_n\in\cbra{0,1}^n$ be the output, then $y_i =\sum_{j}\Mmat_{ij}x_j$. We first divide the summation into several parts. Let $z_{ij}:= \sum_{k=(j-1)m_1 + 1}^{j m_1} \Mmat_{ik}x_k$ where $i\in[n]$ and $j \in [n/m_1]$ (Here we suppose $n/m_1$ is an integer. It is easy to generalize it to the general case.). It is easy to check that the $i$-th output qubit $y_i = \sum_{j}z_{ij}$. Let $s := m_2 - 2n/m_1$.  Let the coordinate $(i,j)$ represent the $i$-th row and $j$-th column of the $m_1\times m_2$ grid.
	
	Algorithm \textbf{DepAncGrid} implements the transformation \[\pbra{x,0^{\otimes(m-2n)},0^{\otimes n}}\xrightarrow{U_{\Mmat}} \pbra{x, 0^{\otimes (m-2n)}, \Mmat x}.\] 
    Hence, the transformation 
    \begin{equation}
        \pbra{x, 0^{\otimes (m-n)}}\rightarrow \pbra{\Mmat x, 0^{\otimes (m-n)}}
    \end{equation}
    can be implemented by first performing 
    \begin{equation}
        \pbra{x,0^{m-2n},0^{ n}}\xrightarrow{U_{\Mmat}} \pbra{x, 0^{m-2n}, \Mmat x},
    \end{equation} 
    and then performing 
    \begin{align}
          \pbra{x,0^{m-2n},\Mmat x}\xrightarrow{U_{\Mmat^{-1}}} &\pbra{x\oplus\Mmat^{-1}\Mmat x, 0^{m-2n}, \Mmat x}\\
      = &\pbra{0^{m-n}, y}.  
    \end{align}
     Finally, move $y_j$ to the first $n/m_1$ columns for all $j$ by SWAP gates.
	Hence we have an equivalent paralleled CNOT circuit for any given CNOT circuit.	We depict this process in Fig. \ref{fig:2dgrid}. 
	
	\begin{widetext}
	\begin{algorithm}[h]
	\SetKwInOut{Input}{input}\SetKwInOut{Output}{output}
	\Input{Matrix $\Mmat\in \cbra{0,1}^{n\times n}$, $m_1\times m_2$ grid, $s:=m_2 - 2n/m_1$.}
	\Output{Optimized CNOT circuit.}
	\emph{Place input $x_1,\dots, x_n$ in the first $n/m_1$ columns sequentially of the grid \tcp*{$x_1,\ldots, x_{m_1}$ in the first column, and so on, as in Fig. \ref{fig:2dgrid}}}
	\For{$l\leftarrow 1$\KwTo  $n/m_1$}{
\qquad 	\emph{Copy all of $x_i$ in the $l$-th column to the columns $j$ for $n/m_1+1\leq j \leq n/m_1 + s$ in the same row as $x_i$}\;
\qquad	\For{$j\leftarrow s$ \textbf{down} \KwTo $2$}{
\qquad	 \qquad   \If{ $\Mmat_{j,i}$ equals $0$ }{
\qquad	 \qquad     \qquad   \emph{
	        Perform CNOT$_{a,b}$, where qubit $a$ is in coordinate $(i, n/m_1 + j-1)$ and $b$ is in coordinate $(i,n/m_1 + j)$ for $i\in[m_1]$ in parallel}\;
	    }
	}
\qquad	\For{$1\leq j \leq n/s$}{
\qquad\qquad 	\For{$(j-1)s+1\leq k\leq js$ in parallel}{
\qquad\qquad\qquad	   	  \emph{Add all values of each column $k$ to the last row to generate $z_{k,1}$ in the coordinate $(m_1, n/m_1 + k)$}\;
	   }
\qquad\qquad 	\emph{Add $z_{(j-1)s+1,1},\dots, z_{js,1}$ to the mirror symmetric coordinate}\; 
	   }
\qquad\emph{Restore the ancillas in columns $j$ for $n/m_1+1\leq j \leq n/m_1 + s$}\;
	}
	\caption{\textbf{(DepAncGrid)} {Depth optimization algorithm under $m_1\times m_2$ grid}}
	\label{alg:depancgrid}
\end{algorithm}
    \end{widetext}
	The total number of paralleled operations in Algorithm \textbf{DepAncGrid} equals 
	\[
c\frac{n}{m_1}\pbra{\frac{n}{s}\pbra{m_1 + m_2}}=O\pbra{\frac{n^2}{\min(m_1,m_2)}},
	\]
	for a suitable constant $c$.

 Theorem \ref{thm:Depth2Grid} can be generalized to a constant $d$ dimensional grid. Specifically, the depth of any CNOT circuit can be optimized to $O\pbra{\frac{n^2\pbra{m_1 + \cdots + m_d}}{m_1\cdots m_d}}$ under $m_1\times \cdots \times m_d$ grid, where $d$ is a constant and $3n \leq m_1\cdots m_d\leq n^2$.
Let $m$ be the third largest value of $\cbra{m_1,\cdots, m_d}$. 

	\begin{figure*}[htbp]
	\begin{center}
\includegraphics[width=0.8\textwidth]{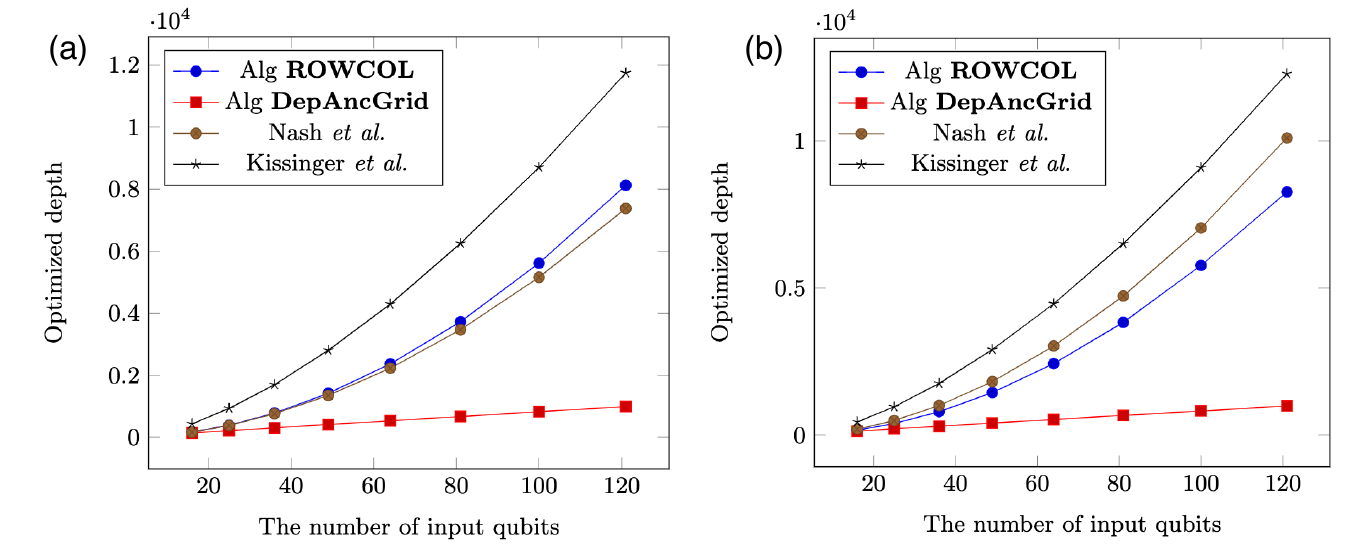}
\caption{{The comparison of optimized depth of these algorithms in the grid graph. We compare the performance of different synthesis algorithms in $n\times n$ grid. We use two different numbering of the grid: (a) numbering row by row from the first row to the last row, and (b) numbering from the inside to the outside in the form of a grid.}}
		\label{fig:exp_depth}
		\end{center}
	\end{figure*}
	
{\subsection{Numerical results}\label{subsec:DepthExperiment}
}

   In this subsection, we give the numerical tests for Algorithm \textbf{DepAncGrid}. We compare the optimized depth of Algorithm \textbf{DepAncGrid} with all of the existing size optimization algorithms on the grid graph mentioned previously. To show the performance of these algorithms, different sizes of $n\times n$ dimensional grids ranging from $4$ to $11$ are selected in this experiment. For one grid, we first randomly sample 200 different CNOT circuits, and then run all these algorithms under this condition. The method to sample a random CNOT circuit here is: (1). Randomly sample an $n\times n$ 0-1 matrix by randomly selecting a ``0'' or ``1'' in each position; (2). Determine whether the matrix sampled in (1) is invertible, if not, return to (1), otherwise accept the matrix as a random CNOT circuit.

   The comparison results are depicted in Fig. \ref{fig:exp_depth}, including the optimized depth of Algorithms \textbf{ROWCOL}, \textbf{DepAncGrid}, and algorithms in Refs. \cite{kissinger2019cnot,nash2019quantum} on grid graph.  In particular, here Algorithm \textbf{DepAncGrid} needs $n^2$ qubits and other algorithms only need $n$ qubits.
The $y$ axis shows the average depth of the optimized circuit. In consideration of reducing the impact of the different Hamiltonian paths chosen in Ref. \cite{kissinger2019cnot}, we choose two different Hamiltonian paths to synthesize the same CNOT circuit. 
    The comparison results show Algorithm \textbf{DepAncGrid} has a significant improvement for the optimized depth as the number of qubits increases.
    Theoretically, the depth of the CNOT circuit generated by Algorithm \textbf{DepAncGrid} equals $O(n)$, while $O(n^2)$ for other algorithms. 

    This experimental result shows that the depth of the CNOT circuit can be greatly reduced for the ancillas-free quantum system.

{\section{Experimental result on IBMQ}
\label{sec:experimentIBMQ}
}

In this section, we test the performance of our optimized CNOT circuits on IBM devices.
We implement a staircase CNOT circuit and an Add CNOT circuit, which have wide applications in error correction~\cite{nielsen2002quantum}, variational algorithms~\cite{nielsen2002quantum,cerezo2020variational} and quantum chemistry~\cite{hastings2014improving,sugisaki2019open}. 

We leverage IBMQ devices (ibmq\_athens and ibmq\_5\_yorktown) as the {limited connectivity architecture}~\cite{IBMQ2021}, as shown in Fig. S1 of the Supplementary material. In Fig. \ref{fig:example} (a), we give the staircase circuit without considering the {limited connectivity architecture}, with input $\ket{\phi} = \frac{\ket{0} + \ket{1}}{\sqrt{2}} \ket{0} \frac{\ket{0} + \ket{1}}{\sqrt{2}} \ket{0} \ket{0}$. We perform the circuit on IBMQ with the mapping the CNOT circuitprovided by the ROWCOL algorithm in Fig. \ref{fig:example} (b). There is a layer of $H$ gates before the CNOT circuit in Fig. \ref{fig:example} (b) to generate the input state $\ket{\phi}$ from initial state $\ket{0}^{\otimes 5}$ of the IBMQ device.

\begin{figure}[htbp]
        \centering
        \begin{small}
                \begin{tabular}[b]{c}
              \Qcircuit @C=0.6em @R=0.6em {
       \lstick{ q_0} &\ctrl{1} & \qw & \qw & \qw &
       \qw & \qw & \ctrl{1} & 
       \qw\\
       \lstick{ q_4} & \targ & \ctrl{1} & \qw & \qw & 
        \qw & \ctrl{1} & \targ &
       \qw\\
       \lstick{ q_1} & \qw & \targ & \ctrl{1} & \qw & 
        \ctrl{1} & \targ &\qw &
       \qw\\
       \lstick{ q_3} & \qw & \qw & \targ & \ctrl{1} & 
        \targ & \qw & \qw &
       \qw\\
       \lstick{ q_2} & \qw& \qw & \qw & \targ & 
       \qw & \qw & \qw&
       \qw
       } 
            \\
           \small (a)
        \end{tabular}
        \end{small}
        \qquad
        \begin{small}
                \begin{tabular}[b]{c}
              \Qcircuit @C=0.6em @R=0.6em {
       \lstick{ q_0} & \gate{H}      \barrier{4}&\ctrl{1} & \qw & \qw & \qw & \ctrl{1} & 
       \qw\\
       \lstick{ q_1} & \qw & \targ & \ctrl{1} & \qw  &  \ctrl{1} & \targ &
       \qw\\
       \lstick{ q_2} & \gate{H} & \qw & \targ & \targ  & \targ &\qw &
       \qw\\
       \lstick{ q_3} & \qw & \targ & \qw & \ctrl{-1} & \targ & 
        \qw & \qw  \\
       \lstick{ q_4} & \qw & \ctrl{-1}& \qw & \qw & \ctrl{-1} & 
       \qw & \qw 
       } 
            \\
            \small (b)
        \end{tabular}
        \end{small}
        \caption{{Mapping the staircase CNOT circuit to the {limited connectivity} 1D grid graph. (a) Staircase CNOT circuit with input state $\frac{\ket{0} + \ket{1}}{\sqrt{2}}\ket{0} \frac{\ket{0} + \ket{1}}{\sqrt{2}}\ket{0}\ket 0$. (b) A layer of $H$ gates, followed by a block of CNOT circuit, which is equivalent to the CNOT circuit of (a), and can be performed on the ibmq\_athens device (a 1D grid). The input state in (b) equals $\ket{0}^{\otimes 5}$.
        }}
        \label{fig:example}
    \end{figure}

We compared the measurement results of ROWCOL algorithm and IBM optimization algorithm on ibmq\_athens quantum device, and plot the classical simulation result (ideal quantum circuit without any error) as a comparison, as shown in Fig. \ref{fig:CNOTIBM}. We performed 8,000 measurements on each circuit independently. The horizontal axis shows the results measured under the computational basis, where $j$ represents computational basis $j_0j_1\ldots j_4$. The vertical axis shows the frequency of each outcome after 8,000 measurements. The ideal output state after performing the CNOT circuit $\Ccal$ in Fig. \ref{fig:example} (a) is $\ket{\psi} = \Ccal \ket{\phi} = \frac{\ket{0} + \ket{4} + \ket{16} + \ket{20}}{2}$. Hence the expected frequency of the result after 8,000 repeated measurements is $\cbra{\ket{0}:4000;\ket{4}:4000;\ket{16}:4000;\ket{20}:4000}$.
Fig. \ref{fig:example} shows that the simulation results are consistent with the expected frequency. It can also be seen from Fig. \ref{fig:CNOTIBM} that the circuit optimized by ROWCOL algorithm has a strong robustness to errors. We can extract the correct measurement result by setting a small threshold $y = 1000$ and selecting the outcomes whose frequency is greater than that threshold. As a comparison, there are tremendous errors in the circuit measurement results obtained directly by IBM's mapping method.

\begin{figure*}[htbp]
    \centering
    \includegraphics[width = 1.0\textwidth]{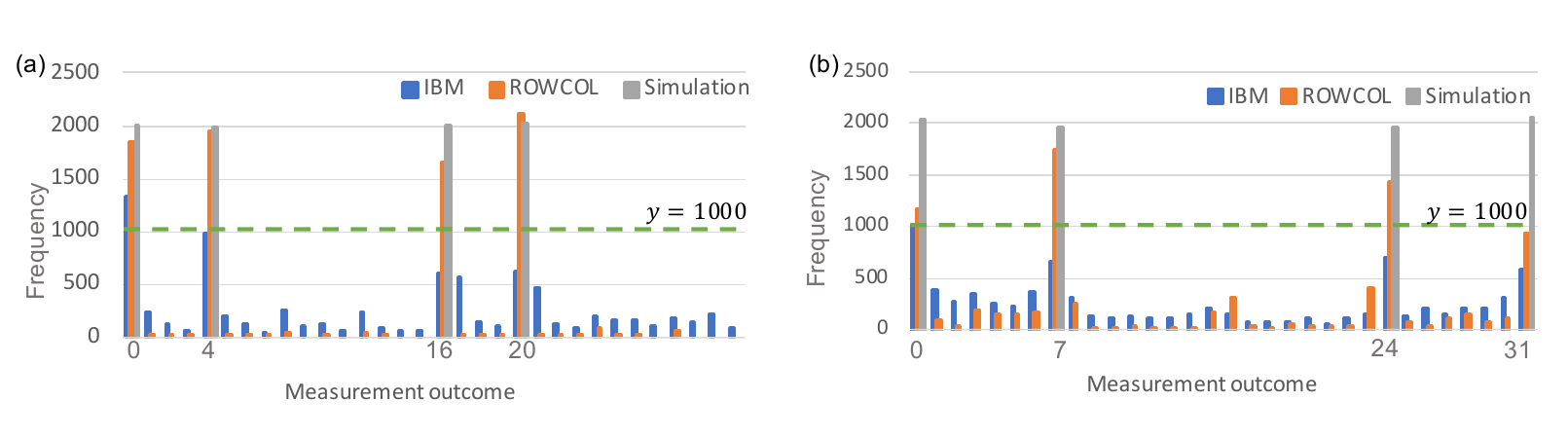}
    \caption{{The comparison of the ROWCOL algorithm, IBM optimization algorithm on IBM devices, and the classical simulation result.
    (a) For the CNOT circuit in Fig. \ref{fig:example} (a) under ibmq\_athens device. (b) For the CNOT circuit in Fig. \ref{fig:example_Add} (a) under ibmq\_5\_yorktown device.}}
    \label{fig:CNOTIBM}
\end{figure*}

The Add circuit, as shown in Fig. \ref{fig:example_Add} (a) and performed on the ibmq\_5\_yorktown device, has a similar performance, as shown in Fig. \ref{fig:CNOTIBM} (b).

\begin{figure}[htbp]
        \centering
        \begin{small}
                \begin{tabular}[b]{c}
              \Qcircuit @C=0.6em @R=0.6em {
       \lstick{ q_0} 
       &\ctrl{4} & \qw & \qw & \qw &
       \ctrl{3} & \qw & \qw & 
       \ctrl{2} & \qw &
       \ctrl{1} &\qw\\
       \lstick{ q_1} 
       & \qw & \ctrl{3} & \qw & \qw & 
        \qw & \ctrl{2} & \qw &
       \qw & \ctrl{1} &
       \targ & \qw \\
       \lstick{ q_2} 
       & \qw & \qw & \ctrl{2} & \qw & 
        \qw & \qw &\ctrl{1} &
       \targ & \targ &
       \qw & \qw\\
       \lstick{ q_3} 
       & \qw & \qw & \qw & \ctrl{1} & 
        \targ & \targ & \targ &
       \qw & \qw & 
       \qw & \qw\\
       \lstick{ q_4} &
       \targ& \targ& \targ & \targ &
       \qw & \qw & \qw&
       \qw & \qw & 
       \qw & \qw
       } 
            \\
           \small (a)
        \end{tabular}
        \end{small}
        \qquad
        \begin{small}
                \begin{tabular}[b]{c}
              \Qcircuit @C=0.6em @R=0.6em {
       \lstick{ q_0} & \gate{H}      \barrier{4}&\qw & \ctrl{2} & \qw & \ctrl{1} & \qw\\
       \lstick{ q_1} & \qw & \ctrl{1}& \qw & \qw  &  \targ & \qw\\
       \lstick{ q_2} & \gate{H} 
       & \targ & \targ &\ctrl{2} & \ctrl{1} &\qw\\
       \lstick{ q_3} & \qw &
       \ctrl{1} & \qw & \qw & \targ & 
        \qw \\
       \lstick{ q_4} & \qw & 
       \targ & \qw & \targ & \qw & 
       \qw
       } 
            \\
            \small (b)
        \end{tabular}
        \end{small}
        \caption{{Mapping the CNOT Add circuit to ibmq\_5\_yorktown device. (a) CNOT Add circuit with input state $\frac{\ket{0} + \ket{1}}{\sqrt{2}}\ket{0} \frac{\ket{0} + \ket{1}}{\sqrt{2}}\ket{0}\ket 0$. (b) A layer of $H$ gates, followed by a block of CNOT circuit, which is equivalent to the CNOT circuit of (a), and can be performed on the ibmq\_5\_yorktown device. The input state in (b) equals $\ket{0}^{\otimes 5}$.
        }}
        \label{fig:example_Add}
    \end{figure}

\section{Discussion}
\label{sec:discuss}

Optimization of the size/depth of the quantum circuit with {limited connectivity architecture} is one of the main challenges in near-term quantum computing~\cite{preskill2018quantum,paler2014mapping,paler2016synthesis}. In this paper, we propose two size/depth optimization algorithms on the {limited connectivity architecture}. 
The experimental results show our algorithms have better performances compared to the existing optimization algorithms.

{Specifically, for the connected graph which has a minimum degree $\delta$, any $n$-qubit CNOT circuits can be optimized to $O\pbra{\frac{n^2}{\log \delta}}$ size on a such graph. We also prove the order is tight for a regular graph.
Algorithm \textbf{SBE} further indicates the size of any $n$-qubit CNOT circuits can be optimized to $O\pbra{\frac{n^2}{\log (n/k)}}=O\pbra{\frac{n^2}{\log \delta}}$
on the {limited connectivity architecture} where the average degree of any $k$ vertex set is greater than $n/k$ for the {limited connectivity architecture}. }

{For the {limited connectivity architecture} which has a constant degree, we can optimize any CNOT
circuits to $2n^2$ size on any connected graph, the order is tight when the {limited connectivity architecture} is a simple path.
We also give an algorithm that takes more features of the graph into account and gives a better upper bound for the specific class of graphs.}

{
With these two size optimization algorithms, we
see that the optimized size of a CNOT circuit is dominated by the degrees of the limited connectivity structure. Here we give the intuition by ``implementation of a single CNOT gate''.
For the structure where the summation of any $k$ degrees from $k$ different vertices is larger than $n$, the number of required CNOT gates to construct a CNOT gate CNOT$_{ij}$ scales as $10k$ with Lemma \ref{lem:MinSteinerSize}. It implies that 
for the limited connectivity structure with minimum degree $\delta$, the number of required CNOT gates to construct a CNOT gate CNOT$_{ij}$ scales as $\frac{10n}{\delta}$
in the worst case. This phenomenon explains why the smaller optimized size is corresponding to stronger connectivity.
}

{Since the current quantum superconducting devices~\cite{IBMQ2021, Arute2019,boixo2018characterizing} usually have a low degree for the {limited connectivity architecture}, and the coherence time is very limited, we also consider a specifically {limited connectivity architecture} --- the $m_1 \times m_2$ grid. We show that any $n$-qubit CNOT circuits can be optimized to $O\pbra{\frac{n^2\pbra{m_1 + m_2}}{m_1m_2}}$ depth on this grid, where $3n \leq m_1,m_2\leq n^2$. This optimized result can be easily generalized to any constant $d$ dimensional grid.
The dimensions for a corresponding grid of current quantum superconducting devices~\cite{IBMQ2021,Arute2019,boixo2018characterizing} are $d\in \{1,2\}$. }

{
To conclude the suitable scenarios of the proposed algorithms, we list the options for the algorithms with the degrees of the associated limited connectivity structures as follows:
\begin{itemize}
    \item [(a)] The SBE algorithm has a better performance for the quantum computer with a comparatively large number of the minimum degree of the limited connectivity structure. This algorithm has a better performance compared to other existing algorithms for the cases where the minimum degree $\delta$ of the structure is larger than the constant, i.e., $\delta > O(1)$ and the number of input qubits are comparatively large (larger than hundreds of qubits).
    \item[(b)] The ROWCOL algorithm has a better performance for the near-term quantum device with dozens of qubits or the constant degree of the limited connectivity structure, such as 1D or 2D grids.
    \item [(c)] The DepAncGrid algorithm is suitable for 1D and 2D limited connectivity structures.
\end{itemize}}

{Note that the 65-qubit quantum superconducting device proposed by IBMQ~\cite{IBMQ2021} is not exactly a grid, nevertheless, it is a sub-graph of a grid. We can still perform Algorithm \textbf{DepAncGrid} on the expanded grid by leverage of SWAP gates to implement the CNOT gate for two vertices that are not connected in the sub-graph. Another avenue is to construct a new virtual 2D grid, and convert the single CNOT gate in the virtual grid to a series of CNOT gates in the real devices. We can ensure the additional cost of CNOT gates is bounded to a constant factor of the original one due to its sub-grid structure.}

{We list two open problems for the optimization of CNOT circuits on the {limited connectivity architecture}. 
(1) Is there any improved size optimization algorithm for some more specific structures under the {limited connectivity architecture}?
(2) If there are no ancillas, can we give some better results for the depth optimization of CNOT circuits on the $m_1 \times \cdots \times m_d$ grid for constant $d$?
}

{Similar to the CNOT gate, the Toffoli gate is also a classical reversible gate.
Nevertheless, the optimization of Toffoli gates is much more complex than CNOT circuits. An explicit reason is that it is not a linear map, so we can not construct the connection for the gate operations and the linear operations on a smaller size matrix, i.g., $\cbra{0,1}^{n\times n}$ boolean matrix. The intrinsic reason is that Toffoli circuits are computationally universal, and it generates all possible reversible transformations $f: \cbra{0,1}^n \rightarrow \cbra{0,1}^n$ if ancillas are allowed to be used~\cite{aaronson2017classification}. Since Toffoli is also a classical reversible gate, we can optimize specific Toffoli circuits instructed by the specific boolean function it implemented ---similar to the process with CNOT circuits. We give an example in the supplementary material for the implementation of the staircase of Toffoli gates under the full connectivity structure. We also leave an interesting open question of whether it can be applied to more general Toffoli circuits. 
}

\section*{Acknowledgement}
The authors thank Jiaqing Jiang for helpful discussions.
This work was supported in part by the National Natural Science Foundation of China Grants No. 61832003,  62272441, 61872334, 61801459, 12147133, the Strategic Priority Research Program of Chinese Academy of Sciences Grant No. XDB28000000, and Zhejiang Lab's International
Talent Fund for Young Professionals. 

\begin{appendix}
\section{Proof of Lemma 1}\label{app:ProfMSS}    

    \begin{proof}[Proof of Lemma 1]
   Denote $T_k(G) = \max_{|S|=k,S\subseteq V}T_G(S)$, where $T_G(S)$ is the size (the number of vertices) of the minimum Steiner tree for vertex set $S$ in $G$.
   
    Let $S=\cbra{u_1, \cdots, u_k} \subseteq V$ and $A$ be an empty set.
    In a connected graph $G(V,E)$, let the distance between two vertices be the number of edges in the shortest path that connects them. 
    Let $d(i,j)$ denote the distance between vertex $i$ and $j$ in Graph $G$, $d(A, v)$ denote the minimum distance between vertex $v$ and set $A$.
    
    Firstly put $u_1$ into set $A$, and then put all of $v\in V$ such that $d(A, v)= 3$ into $A$. That is,
    \begin{itemize}
        \item $d(v_i,v_j) \geq 3$ if $v_i, v_j \in A$.
        \item $d(A, v_i) \leq 2$ if $v_i \not\in A$.
    \end{itemize}
    Let $A'=\{a_1,\cdots, a_k\}$ be a set such that the element $a_i\in A'$ is a vertex in set $A$ and closest to $u_i$ in $S$.
    By the construction of $A$, we have
    $T_{G}(\cbra{a_1, \cdots, a_k})\leq 3(|A| - 1) + 1\leq 3|A|$.

    By the definition of $G$, for any $k$ vertex in $G$, the summation of their degrees is greater than $n$. On the other hand, since there are no common neighbors for any two vertex in set $A$, $\sum_{v\in A}d(v) + |A|\leq n$, thus we have $|A|<k$.
    
    Therefore,
   \[
    \begin{aligned}
    T_{G}(S)&\leq T_{G}(\cbra{a_1,\cdots, a_k}) + \sum_{j = 1}^{k} T_{G}(a_j, u_j)\\
    &\leq 3|A| + 2(k-1) \\
    &\leq 5k
    \end{aligned}
  \]
    \end{proof}

\section{Proof of Lemma 4}\label{app:profCanonical}
\begin{proof}
	For any $n$-qubit CNOT circuits with $k$ gates $R_1,R_2$, $\dots, R_k$, we denote their canonical forms by $\{G_1, G_2,\dots, G_s\}$, in which the lengths of the blocks are
	$ r_1,r_2,\dots,r_s$ respectively. We first consider the number of different partitions, \emph{i.e.}, different selections of $r_1,r_2,\dots, r_s$. It is easy to see the number is $2^{k-1}$ for any combination from set $[k-1]$, which is a partition of set $[k]$.
	
	Next, we derive the upper bound of the number of distinct canonical forms, given the specific partitioning $r_1,r_2,\dots, r_s$.
	
	For block 1, the index set of each matrix is required to be disjoint. Thus, there are at most $\dbinom{n}{r_1}$ different combinations of $r_1$ indices and there are at most $\Delta$ choices for another index of a matrix since the CNOT gates can only act on the nearest neighbor qubits and the maximum degree of the graph is $\Delta$. All this leaves for block 1  at most $(2\Delta)^{r_1} \dbinom{n}{r_1}$ possible combinations, where the factor 2 is due to the different order of the indices.

	Subsequently, for block 2, each index set of the matrix has at least one element intersecting with that of block 1, so we need to choose the $r_2$ index from the index set of block 1. The  number of possible combinations is $\dbinom{2r_1}{r_2}$. Similar to block 1, there are at most $(2\Delta)^{r_2} \dbinom{2r_1}{r_2}$ possible combinations.
	
	For the same reason, block $i$ has at most $(2\Delta)^{r_i}\dbinom{2r_{i-1}}{r_{i}}$ possible combinations. In all, the number of distinct canonical forms is at most
	\begin{equation}
	2^k \Delta^{k} \dbinom{n}{r_1}\dbinom{2r_1}{r_2}\dots \dbinom{2r_{s-1}}{r_s}.
	\end{equation}
	Since $\dbinom{n}{r_1} < n^{r_1}/r_1! < 2^{n\log n} / r_1! $, and $\dbinom{2r_i}{r_{i+1}} < 2^{r_{i+1}} r_i^{r_{i+1}} / r_{i+1}! $, we can relax the upper bound to
	\begin{equation}
	\frac{  4^k \Delta^{k} 2^{n\log n} r_1^{r_2} r_2^{r_3} \cdots r_{s-1}^{r_{s}}}{r_1! r_2! \cdots r_s!}.
	\end{equation}
	
From the Stirling formula, the following inequality holds
	\begin{equation}
	r_1!r_2!\cdots r_s! \ge \left(\frac{r_1}{e} \right)^{r_1}\left(\frac{r_2}{e} \right)^{r_2}\cdots \left(\frac{r_s}{e} \right)^{r_s}.
	\label{eq1}
	\end{equation}
And then applying the rearrangement inequality to obtain 
	\begin{equation}
	    r_1^{r_1} r_2^{r_2} \cdots r_{s}^{r_{s}}	\ge r_1^{r_2} r_2^{r_3} \cdots r_{s-1}^{r_{s}} r_{s}^{r_{1}} \ge r_1^{r_2} r_2^{r_3} \cdots r_{s-1}^{r_{s}}, 
	    \label{eq2}
	\end{equation}
The last inequality holds for $r_{s}\ge 1 $ and $r_{1} \ge 1$. Combining Eq. (\ref{eq1}) and (\ref{eq2}), we have the following inequality, 
	\begin{equation}
	    r_1!r_2!\cdots r_s! \ge  e^{-k} r_1^{r_2} r_2^{r_3} \cdots r_{s-1}^{r_{s}}.
	\end{equation}
	Therefore, we can obtain the desired upper bound by multiplying the number of different partitioning ways.
	\begin{equation}
	 8^k e^{k} \Delta^{k} 2^{n\log n}.
	\end{equation}
\end{proof}

\section{Algorithms for copy operation and add operation\label{app:CopyAdd}}

\begin{proof}[Proof of Lemma 5]
Each qubit in the hypergrid has a $d$ dimensional grid coordinate $(p_1,p_2,\dots,p_d)$, $p_i\in [m^{1/d}]$.  We may assume $q_1$ in $(1,1,\dots,1)$ without loss of generality, otherwise $q_1$ can be moved to $(1,1,\dots,1)$ by at most $dm^{1/d}$ SWAP gates, where every SWAP gate can be implemented by $3$ CNOT gates. We can copy $q_1$ to all other vertices by following operations, we use '$\leftarrow$' to denote CNOT from qubit at right coordinate to left coordinate.
\begin{enumerate}
\item for $k$ from $2$ to $m_1$, $(k,1,\dots,1)\leftarrow (k-1,1,\dots,1)$,
\item for every $p_1\in [m_1]$, for $k$ from $2$ to $m_2$, $(p_1,k,1,\dots,1)\leftarrow (p_1,k-1,1,\dots,1)$,
\item for every $p_1\in[m_1],p_2\in [m_2]$, for $k$ from $2$ to $m_3$, $(p_1,p_2,k,1,\dots,1)\leftarrow (p_1,p_2,k-1,1,\dots,1)$,
\item[] $\dots$
\item[\textbf{d.}] for every $p_i\in[m_i]$,$1\le i\le d-1$, for $k$ from $2$ to $m_d$, $(p_1,\dots, p_{d-1}, k)\leftarrow (p_1, \dots, p_{d-1},  k-1)$,
\end{enumerate}
we copy $q_1$ to all other vertices by the above operations, operations in the $i$-th item can be parallelized to $m_i-1$ depth, hence the  total depth is $O(\sum_{i=1}^dm_i)$.
\end{proof}

\begin{proof}[Proof of Lemma 6]
By the same argument in proof of Lemma 5, we may assume the $n$-th vertex in $(1,1,\dots,1)$ without loss of generality. We can implement the addition operation as follows.
\begin{enumerate}
\item For $k$ from $2$ to $m_1$, if $ (k,1,\dots,1)\notin S$, then $(k-1,1,\dots,1)\leftarrow (k,1,\dots,1)$,
\item For every $p_1\in [m_1]$, for $k$ from $2$ to $m_2$, if $ (p_1,k,1,\dots,1)\notin S$, then $(p_1,k-1,1,\dots,1)\leftarrow (p_1,k,1,\dots,1)$,
\item For every $p_1\in[m_1],p_2\in[m_2]$, for $k$ from $2$ to $m_3$, if $(p_1,p_2,k,1,\dots,1)\notin S$, then $(p_1,p_2,k-1,1,\dots,1)\leftarrow (p_1,p_2,k,1,\dots,1)$,
\item[] $\dots$
\item[\textbf{d.}] For every $p_i\in[m_i]$, $1\le i\le d-1$, for $k$ from $2$ to $m_d$, if $(p_1,p_2,\dots,p_{d-1},k)\notin S$, then $(p_1,p_2,\dots,p_{d-1},k-1)\leftarrow (p_1,p_2,\dots,p_{d-1},k)$,
\item[\textbf{d+1.}] For every $p_i\in[m_i]$, $1\le i\le d-1$, for $k$ from $m_d$ to $2$, $(p_1,p_2,\dots,p_{d-1},k-1)\leftarrow (p_1,p_2,\dots,p_{d-1},k)$,
\item[] $\dots$
\item[\textbf{2d-2.}] For every $p_1\in[m_1],p_2\in [m_2]$, for $k$ from $m_3$ to $2$, $(p_1,p_2,k-1,1,\dots,1)\leftarrow (p_1,p_2,k,1,\dots,1)$,
\item[\textbf{2d-1.}] For every $p_1\in[m_1]$, for $k$ from $m_2$ to $2$, $(p_1,k-1,1,\dots,1)\leftarrow (p_1,k,1,\dots,1)$,
\item[\textbf{2d.}] For $k$ from $m_1$ to $2$, $(k-1,1,\dots,1)\leftarrow (k,1,\dots,1)$,
\end{enumerate}
we get the sum of all vertices, except the vertices not in $S$ added twice. The arithmetic operations are over $\Fbb_2$, so $y=\sum_{i\in S} x_i$. We can recover $x_1,x_2,\dots,x_{n-1}$ by reversing above operations which are not related to $(1,1,\dots,1)$. The operations in the $i$-th item and $(2d+1-i)$-th item can be parallelized to $m_i-1$ depth, where $1\le i\le d$, hence the total depth is $O(\sum_{i=1}^dm_i)$.
\end{proof}

\section{The constrained graphs for the IBM devices for our experiments}\label{app:ibm_device}

 \begin{figure}[htbp]
    \centering
    \includegraphics[width = 0.5\textwidth]{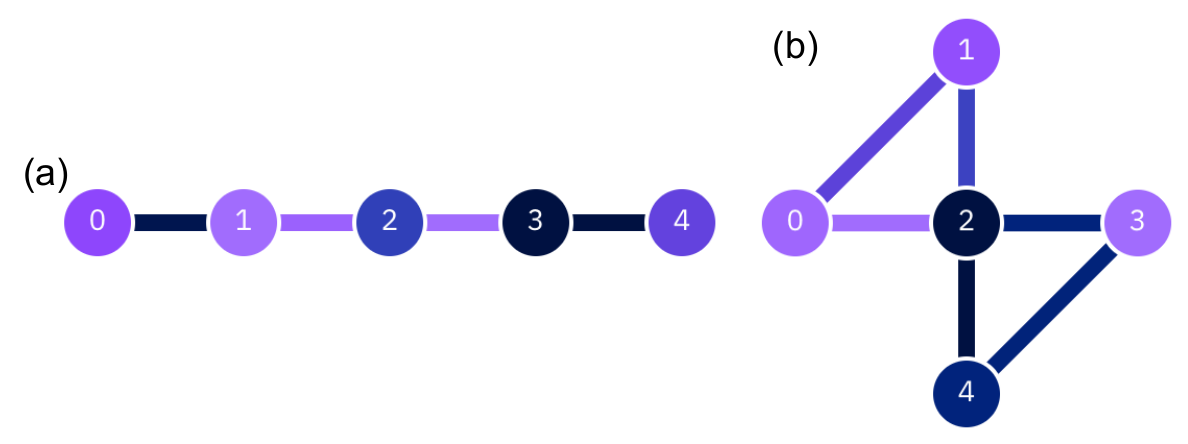}
    \caption{The topological constrained graph for (a) ibmq\_athens device, and (b) ibmq\_5\_yorktown device.}
    \label{fig:ibm_device}
\end{figure}

\section{Optimization of the staircase of Toffoli gates}
\label{sec:toffoli_staircase}

In this section, we give the optimization method for staircase of Toffoli gates.

\begin{figure}[ht]
    \centering
    \[\Qcircuit @C=1em @R=.7em {
    & \ctrl{2} & \qw & \qw & \qw & \qw \\
    & \ctrl{1} & \qw & \qw & \qw & \qw \\
     & \targ & \ctrl{2} & \qw & \qw & \qw \\
     & \qw & \ctrl{1} & \qw & \qw & \qw \\
     & \qw & \targ & \ctrl{2} & \qw & \qw \\
     & \qw & \qw & \ctrl{1} & \qw & \qw \\
     & \qw & \qw & \targ & ^{\cdots} \qw & \qw \\
   } \]
    \caption{Staircase of Toffoli gates.}
    \label{fig:StairToffoli}
\end{figure}

\begin{lemma}
The staircase of Toffoli gates (as in Fig. \ref{fig:StairToffoli}) can be paralleled to $O(\log n)$ depth with $O(n^3)$ ancillas under the full connectivity structure.
\end{lemma}

\begin{proof}
Let unitary $\Ccal$ denote the circuit of the staircase of Toffoli gates (Figure \ref{fig:StairToffoli}), and let output $y:= \Ccal x$, where $x\in \cbra{0,1}^n$ is the input string,  then we have
$$y_k =
\begin{cases}
\sum_{0\leq j \leq k/2} x_{2j} \prod_{j\leq i \leq k/2 - 1}x_{2i+1}, \text{ if } k\text{ mod } 2 = 0,\\
x_k, \text{ otherwise.}
\end{cases}$$
for any $0\leq k \leq n - 1$. It's easy to find that the frequency of $q_i$ in all of $y_k$ is
$$\# x_{2k - 1} = k (n - k),\quad \# x_{2k} = n - k. $$
where $0\leq k \leq \frac{n - 1}{2}$.

We can construct all of the required outputs by
\begin{itemize}
    \item [(2)] Computing all of the multiplications of the variables in every clause in parallel with all of the copied variables in the ancillas;
    \item [(3)] Computing all of the addition of all the clauses in parallel;
    \item[(4)] Restore all of the qubits except the generated output $y$.
    \item [(5)] Let $U_{f}$ be the circuit of Steps (1-4). Restore all of ancillas to zeros by performing $U_{f^{-1}}$ to qubits $(y,x,\textbf{0})$, where $y = f(x)$ is the associated boolean functions of the staircase of Toffoli gates, and $U_f$ is the circuit constructed by the sub-circuits from Steps (1-4). 
\end{itemize}

If we are allowed to use CNOT gates, then the copy operation can be implemented with CNOT circuits, otherwise, we can use a Toffoli gate with one control qubit initialized as one to implement the CNOT gate.
Hence the copy and multiplication operations
$x_0 x_1 \cdots x_n$ or $x_0 \oplus x_1 \oplus \cdots \oplus x_n$ 
both can be implemented in $O(\log n)$ and $O(n)$ ancillary qubits.
For each $y_k$ we require $O(n^2)$ ancillas, and Steps (1-4) require $O(n^3)$ ancillas in total.
Here the ${f^{-1}}$ can be implemented as the inverse of the circuit in Fig.~\ref{fig:StairToffoli}. $U_{f^{-1}}$ can be implemented with $O(n)$ size of Toffoli gates since the expression of $f^{-1}(x)$ 
is much easier.
\end{proof}

\end{appendix}

\end{document}